\newcommand{\sets}{\ensuremath{\mathcal{S}}}
\newcommand{\one}{\ensuremath{\mathbf{1}}}
\newcommand{\prob}{\ensuremath{\mathbf{P}}}
\newcommand{\expec}{\ensuremath{\mathbf{E}}}
\newcommand{\ind}{\ensuremath{\mathbf{I}}}
\newcommand{\reals}{\ensuremath{\mathbb{R}}}
\newcommand{\naturals}{\ensuremath{\mathbb{N}}}
\newcommand{\defeq}{\ensuremath{\triangleq}}
\newcommand{\sP}{\ensuremath{\mathsf{P}}}
\newcommand{\sQ}{\ensuremath{\mathsf{Q}}}
\newcommand{\ubar}[1]{\underaccent{\bar}{#1}}
\newcommand{\ubarc}{\ubar{C}}
\newcommand{\barc}{\bar{C}}
\newcommand{\tildec}{\tilde{C}}
\newcommand{\tilden}{\tilde{n}}
\newcommand{\vhat}{\hat{V}}
\newtheorem{theorem}{Theorem}[section]
\newtheorem{lemma}{Lemma}[section]
\newtheorem{definition}{Definition}
\newcommand{\maskforsubmission}[1]{ }
\title{Mean Field Equilibria for Competitive Exploration in Resource Sharing Settings}
\author
{Pu Yang\thanks{py75@cornell.edu},  Krishnamurthy Iyer\thanks{kriyer@cornell.edu}, and Peter Frazier\thanks{pf98@cornell.edu}\\
\\
\normalsize{School of Operations Research and Information Engineering, Cornell University}}
\date{}
\begin{document} 


\maketitle 

\begin{abstract}
We consider a model of nomadic agents exploring and competing for time-varying
location-specific resources, arising in crowdsourced transportation services,
online communities, and in traditional location-based economic activity.
This model comprises a group of agents, and a set of locations each endowed
with a dynamic stochastic resource process. 
Each agent derives a periodic reward determined by the overall resource level
at her location,
and the number of other agents there.
Each agent is strategic and free to move between locations, and at each time
decides whether to stay at the same node or
switch to another one. We study the equilibrium behavior of the agents as a
function of dynamics of the stochastic resource process and the nature of the
externality each agent imposes on others at the same location. In the
asymptotic limit with the number of agents and locations increasing
proportionally, we show that an equilibrium exists and has a threshold
structure, where each agent decides to switch to a different location based
only on their current location's resource level and the number of other agents
at that location. This result provides insight into how system structure
affects the agents' collective ability to explore their domain to find and
effectively utilize resource-rich areas. It also allows assessing the impact of
changing the reward structure through penalties or subsidies.
\end{abstract}

\section{Introduction}
\label{sec:intro}

We consider a model of nomadic agents exploring and competing for
time-varying stochastic location-specific resources.  Such multi-agent
systems arise in crowdsourced transportation services like Uber and Lyft
where drivers position themselves to be close to demand; in online
communities like Twitch and Reddit where webizens choose in which
subcommunities to participate; and in location-specific activity in
the traditional economy, such as food trucks choosing where to position
themselves, fisherman choosing where to fish, and pastoralists
choosing where to graze their animals.  In each of these examples,
overall social welfare is determined both by agents' willingness to
explore their domain to find and exploit resource-rich locations, the
level of antagonism or synergy inherent to having multiple agents at
the same location, and the equilibrium distribution of agents across
locations that these factors induce.

The model we study comprises a set of locations and a group of agents.
Each location has a resource level that varies randomly with time.
Each agent periodically derives resource from the location at which
she currently resides, whose amount is determined by the number of
other agents currently residing there, and the location's current
overall resource level.  Based on these quantities, the agent then
decides whether to stay at the same location or switch to another.
The agents are fully strategic and seek to maximize their total
rewards over their lifetime.

We study the equilibrium behavior of the agents in this system as a
function of dynamics of the spatio-temporal resource process and the
level of synergy or antagonism in the agents' sharing of resources.
We analyze the system under the limit where the number of agents and
locations both increase proportionally, using the methodology of a
mean field equilibrium. We show that an equilibrium exists, and the
agents' optimal strategy has a simple threshold structure, in which it
is optimal to leave a location when the number of other agents exceeds
a threshold that depends on the resource level at that location.  In
the limit as the system grows large, this induces a joint probability
distribution over the number of agents and level of resource at each
location.

Our results allow us to obtain economic insights into how the nature
of the externality agents impose on others at the same location
affects the exploration of the locations for resources, and
consequently the overall welfare of the economy. In particular, our
methodology allows us to analyze settings where the overall reward at
a location either increases or decreases with the number of agents at
the location, and how these two settings affect the equilibrium
exploration. Furthermore, our methodology allows us to evaluate
engineering interventions, such as providing subsidies to or imposing
costs on agents to promote or discourage exploration to improve
welfare.

\paragraph{Examples}

The model we study is a simplified version of systems appearing in
real-world settings.  It arises in the shared economy, in crowdsourced
transportation services such as Uber and Lyft, in which drivers choose
neighborhoods, and then earn money (reward) based on the number of
riders requesting service within that neighborhood (the overall
resource level), and the number of other drivers working there.  This
overall resource level varies stochastically over time in a
neighborhood-specific way as demand rises and falls, and the resource
derived by a driver decreases with the number of other drivers working
in her neighborhood.

This model also arises in the internet economy, in online communities
such as Reddit and Twitch, in which participants choose
sub-communities, and then derive enjoyment depending both on some
underlying but transitory societal interest in the sub-community's
topic of focus (the overall resource) and the number of other
participants in the sub-community.  When the number of other
participants is too small, lack of social interaction prevents
enjoyment; when the number of other participants is too large,
crowding diminishes the sense of community.

This model also arises in the traditional economy, for example in food
trucks deciding in which neighborhoods to locate, in pastoralists
deciding where to graze their livestock, and in fishermen deciding
where to fish.  In these examples, the level of resource derived by
each agent from their location (whether profit from hungry passers by;
or food for livestock provided by the range-land; or profit from the
catch) depends both on the number of other agents at the location, and
on the location's overall and stochastically varying resource level.

This model even arises among scientific researchers, who must choose a
research area in which to work, and derive value from this choice
based both on the underlying level of societal interest and funding in
their chosen area, and in the number of other researchers working in
it. As with online communities, the number of other researchers should
be neither too large nor too small too maximize the value derived.

\paragraph{Related Work}

Our paper adds to the growing literature on mean field equilibrium
\cite{sachin_2010,huang_2007,jovanovic_1988,lasry_2007,weintraub_2008},
that studies complex systems under a large system limit and obtains
insights about agent behavior that are hard to obtain from analyzing
finite models. The main insight behind this line of literature, that
in the large system limit, agents' behavior are characterized by their
(private) state and an aggregate distribution of the rest of system,
has been used to study settings that include industry dynamics and
oligopoly models \cite{hopenhayn_1992,weintraub_2008,weintraub_2010},
repeated dynamics auctions \cite{balseiro2014, iyer2014}, online labor
markets \cite{arnosti}, and queueing systems \cite{xu2013supermarket}.

Our model can be seen as an extension of the Kolkata Paise Restaurant
Problem \cite{chakrabarti2009kolkata}.  In this game, each agent
chooses (simultaneously) a restaurant to visit, and earns a reward
that depends both on the restaurant's rank, which is common across
agents, and the number of other agents at that restaurant.  This
reward is inversely proportional to the number of agents visiting the
restaurant.

The Kolkata Paise Restaurant Problem is itself a generalization of the
El Farol bar problem
\cite{arthur1994inductive,chakrabarti2007kolkata}.  The Kolkata Paise
Restaurant Problem is studied both in the one-shot and repeated
settings, with results on the limiting behavior of myopic
\cite{chakrabarti2009kolkata} and other strategies
\cite{ghosh2010statistics}, although we are not aware of existing
results on mean-field equilibria in this model.  Our model is both
more general, in that we allow general reward functions and allow
location's resource to vary stochastically, and more specific, in that
our locations are homogeneous.  Our model also differs in that our
agents' decisions are made asynchronously.

Our model is also related to congestion games
\cite{nisan2007algorithmic,rosenthal73}, in which agents choose paths
on which to travel, and then incur costs that depend on the number of
other agents that have chosen the same path.  One may view paths as
being synonymous with locations in our model, and observe that in both
cases the utility/cost derived from a path/location depends on the
number of other agents using that path, or portion thereof. The main
difference between our model and congestion games is the stochastic
time-varying nature of our overall level of resource (making our model
more complex), and the lack of interaction between locations
contrasting with the interaction between paths (making our model
simpler).

Our model has within it an exploration vs. exploitation tradeoff, in
which an agent faces the decision of whether to stay at his current
node, exploiting its resource and obtaining a known reward, or to
leave and go to another randomly chosen location with unknown reward.
Visiting this new location provides information upon which future
decisions may be based. Similar tradeoffs between exploration and
exploitation appear widely, and have been studied extensively in the
single-agent setting
\cite{AuCeFi02,Gi89,Ka93,PoRy12}. Exploration and learning in multi-agent
settings has been considered
by \cite{FrazierKempeKleinbergKleinberg2014,Lobel2015}.

\section{Model}
\label{sec:model}

We consider a setting with $N$ agents, each situated at each time
$t \geq 0$ in one of $K$ locations. Each location $k$ has a stochastic
dynamic resource process, denoted by $\{Z_t^{(k)}: t \geq 0\}$, that
determines the reward obtained by each agent at that location, as we
describe below. We assume that each process $\{Z_t^{(k)}: t \geq 0\}$
is a finite state continuous time Markov chain, that is distributed
identically and independently from the rest of the system. For the
purpose of analysis, we assume that 
each $Z_t^{(k)}$ takes
values in $\{0, 1\}$, with holding time at state $z \in \{0,1\}$
distributed as $\textsf{Exp}(\mu_{z,1-z})$ for some fixed
$\mu_{z,1-z} > 0$.

Agents may switch between locations to explore for
resources. Formally, each agent $i$ has an associated independent
Poisson process $X^i_t$ with rate $\lambda > 0$, at whose jump times
$\{T^i_\ell : \ell \geq 0\}$ the agent makes the decision to either
stay in her current location or switch instantaneously to a different
location that is chosen uniformly at random. Let $k^i_t$ denote the
location of agent $i$ at time $t \geq 0$ and let $N^{(k)}_t$ denote the
number of agents at location $k$ at time $t$.

The agents in the model are short-lived, and at each jump time
$T^i_\ell$, the agent $i$ departs the system with probability
$1-\gamma > 0$. Thus, each agent $i$ lives in the system for a random
time $\tau^i$ which is distributed according to
$\textsf{Exp}(\lambda (1-\gamma))$. For each agent $i$ that leaves the
system, a new agent (with the same label $i$) arrives at a location
chosen uniformly at random. We make this modeling assumption to ensure
that the number of agents in the system is always positive; this
assumption can be relaxed to allow for random arrivals and departures,
with the arrival rate equal to the departure rate.

We now describe the decision problem faced by an agent $i$ in more
detail. At each jump time $t =T^i_\ell$, an agent $i$ at location
$k = k^i_t$ receives a reward $R^i_t = F(Z^{(k)}_t,N^{(k)}_t)$ that
depends on the state of the resource process $Z_t^{(k)}$ and the overall
number of agents $N_t^{(k)}$ at the location $k$. In the following, we
assume that the function $F$ governing the reward at each location is
given by $F(z, n) = zf(n)$ for some function
$f:\naturals \to \reals^+$ that is non-increasing in $n$, with
$\lim_{n\rightarrow +\infty} f(n) =0$.  Essentially, this implies that
the reward at a location at any time is zero if the resource process
is in state $0$, and it is equal to $f(n)$ if the resource process is
in state $1$, if there are $n$ agents at that location. Furthermore,
this reward $f(n)$ decreases to zero as the number of agents at a
location increases. 
Given this
setting, each agent $i$ at any time prefers to be at a location with
resource process state equal to $1$, and where the number of other
agents is small.

Within this setting, we will focus on three cases: (1) for
each $n \in \naturals$, we have $nf(n) = 1$. In this 
case the (unit) resource, if available at a location, is shared
equally among the agents at that location; (2) $nf(n)$ is
non-decreasing in $n$. In this case adding agents to a location 
increases the total reward earned, either through synergy, 
or because a small number of agents cannot fully utilize a location's resource;
and (3) $nf(n)$ is non-increasing in the number
of agents $n$ at that location. In this case antagonism or overutilization
causes the total reward earned to decrease as agents are added.

Next, we discuss the information each agent $i$ has while making their
decision to stay or switch. We assume that an agent $i$ has access to
the states of the resource process $Z^{(k)}_t$ and the number of
agents $N^{(k)}_t$ of a location $k$ during the time she is present at
the location $k$, i.e., when $k^i_t = k$. We further assume that
agents have perfect recall, and hence, at a jump time $t = T^i_\ell$,
each agent $i$ bases her decision to switch or stay on the entire
history $h^i_t$ she has observed until time $t$, namely the resource
process states and the number of agents at each location she has
visited during the time period she visited that location: 
\begin{align*}
h^i_t = \left\lbrace (Z_s^{(k^i_s)}, N_s^{(k^i_s)}) : s \leq t\right\rbrace.
\end{align*}
Thus, a strategy $\xi^i$ for an agent $i$, specifies a (mixed) action
between stay and switch at each jump time $t=T^i_\ell$ of her
associated Poisson process $X^i_t$, based on her history $h^i_t$.

Given this informational assumption, each agent $i$ seeks to maximize
the total expected reward accrued over her lifetime, given by
\begin{align*}
  \expec\left[ \sum_{\ell = 0}^\infty R^i_{T^i_\ell} \ind\{\tau^i \geq T^i_\ell\} \right].
\end{align*}
Observe that since the agent departs the system with probability
$(1-\gamma)$ at each jump time independently, the total expected
reward can be equivalently written as
\begin{align*}
  \expec\left[ \sum_{\ell = 0}^\infty \gamma^\ell R^i_{T^i_\ell}\right].
\end{align*}
Thus, each agent's decision problem is equivalent to
maximizing her total discounted expected reward assuming she
persists in the system.

Since the reward at any location is determined by the number of agents
at that location, each agent's decision to stay in her current location
or to switch to a new one depends on all the other agents'
behavior. Consequently, the interaction among the agents is a dynamic
game, and analyzing the agents' behavior requires an
equilibrium analysis.

The standard equilibrium concept to analyze the induced dynamic game
is a perfect Bayesian equilibrium (PBE). A PBE consists of a strategy
$\xi^i$ and a belief system $\mu^i$ for each player $i$. A belief
system $\mu^i$ for agent $i$ specifies a belief $\mu^i(h^i_t)$ after
any history $h^i_t$ over all aspects of the system that she is
uncertain of and that influence her expected payoff. A PBE then
requires two conditions to hold: (1) each agent $i$'s strategy $\xi^i$
is a best response after any history $h^i_t$, given their belief
system and given all other agents' strategies; and (2) each agent
$i$'s beliefs $\mu^i(h^i_t)$ are updated via Bayes' rule whenever
possible (see \cite{tirole,fudenbergT91} for more details).

Observe that a PBE supposes a complex model of agent behavior. It
requires each player $i$ to keep track of her entire history, and
maintain complex beliefs about the rest of the system. While this may
be plausible in small settings, this behavioral model seems
implausible for large systems. On the contrary, in such settings, it
is more plausible that each agent would base her decision to stay or
switch solely on the current state of the location she is in ---
specifically on its level of resource, and the number of other agents
there--- and on the aggregate features of the entire system.
Moreover, we expect that if an agent were to base her decision only on
this information, then she would pursue a ``threshold'' strategy: she
would stay in her current location if the number of agents at that
location is low, and switch to a different location if that number is
high, with the threshold used depending on that location's level of
resource.

Below, we seek to uncover this intuitive behavioral model as an equilibrium
in large systems by letting the number of
agents and the number of location both increase proportionally to
infinity, and studying the limiting infinite system.

\section{Limiting infinite system}
\label{sec:limiting}

In this section, we consider an infinite system that is obtained as
the limit of the finite system as the number of location $K$ and the
number of agents $N$ both tend to infinity, with $N = \beta K$, for
some fixed $\beta >0$. In the limiting system, there are infinite
number of locations and agents, with the expected number of agents per
location fixed at $\beta > 0$. In such a limit, given certain
consistency conditions that bind the mean dynamics of all the
locations, the dynamics of each location essentially decouples from
the rest of the system.  Under such a decoupling, instead of focusing
on the entire limiting system, it suffices to focus on the dynamics of
a single location, as well as the empirical distribution of the states
of all the locations. We begin with the description of the dynamics of
a single location in such an infinite system.

\subsection{Location dynamics}

To analyze the agents' behavior in the infinite system, we fix a
location $k$ and focus on the decision problem faced by an agent $i$
at location $k$ about when to switch to a different location. Let
$X^i_t$ denote the Poisson process with rate $\lambda >0$ associated
with agent $i$, with jump times $T^i_\ell$ for $l\geq 0$.  As before,
let $Z^{(k)}_t$ denote the state of the resource process at location
$k$ and let $N^{(k)}_t$ denote the number of agents at location $k$ at
time $t$. We assume that agents arrive at location $k$ according to a
Poisson arrival process with rate $\kappa>0$. Note that these arriving
include new agents arriving to the system (following a departure), as
well as existing agents who have chosen to switch from their current
location.

Inspired by the discussion at the end of the preceding section, we
focus on a family of threshold strategies for the agents. A threshold
strategy is characterized by a pair $(n_0, n_1) \in [0, +\infty)^2$.
In a threshold strategy $(n_0, n_1)$, an agent at a location with
resource level $z \in \{0,1\}$ chooses to stay at her current location
if the number of agents is strictly below $\lfloor n_z \rfloor$;
chooses to switch her location if the number of agents at her current
location is strictly above $n_z$; and stays with probability
$n_z - \lfloor n_z \rfloor$ and switches with the remaining
probability if the number of agents is equal to $\lfloor
n_z\rfloor$. Our eventual goal is to show that there exists an
equilibrium for agents' behavior where all agents follow (the same)
threshold strategy. For now, we assume that all agents except agent
$i$ adopt a threshold strategy $(n_0,n_1)$, and seek an optimal
strategy over the class of all history-dependent strategies (not just
the class of threshold strategies) for agent $i$.

Note that given the arrival rate $\kappa>0$, and the threshold policy
$(n_0, n_1)$, the process $(Z^{(k)}_t,N^{(k)}_t)$ evolves as a
continuous time Markov chain on the state space
$\sets = \{0,1\} \times \naturals$ with the following transition rate
matrix: for each $z \in \{0,1\}$, and for all $n \in \naturals$, we
have
\begin{align*}
  \sQ((z,n) \to (1-z,n)) &= \mu_{z,1-z},\\
  \sQ((z,n) \to (z,n+1)) &= \kappa,\\  
  \sQ((z,n) \to (z,n-1))  &= \lambda (n-1) \left(1- \gamma\ + \right. \\ 
\gamma \left(\one\{n > n_z\} \right. & + \left.\left. (n+1-n_z)\one\{n = \lfloor n_z \rfloor\}\right)\right).  
\end{align*}
Here, the first equation represents the transitions in
$Z^{(k)}_t$, which is an independent Markov chain on $\{0,1\}$ with
holding times $\mu_{01}$ and $\mu_{10}$. The second equation follows
from the assumption that agents arrive at location $k$ according to a
Poisson process with rate $\kappa>0$. The third equation represents a
transition where an agent at location $k$ leaves. This transition can
occur in two ways: first, the agent could leave the system with
probability $1-\gamma$; second, the agent could survive, with
probability $\gamma$, but choose to switch to a different location,
which happens with probability $1$ if $n > n_z$, with probability
$(n+1-n_z)$ if $n = \lfloor n_z \rfloor$, and zero otherwise. Since
there are $(n-1)$ other agents that make this decision to stay or
switch at rate $\lambda$, these transitions occur at rate
$\lambda(n-1)$. We denote this continuous time Markov chain describing
the dynamics of a single location, where all agents adopt the
threshold policy $(n_0, n_1)$ and the rate of arrival of agents is
$\kappa$, by $\mathcal{MC}(n_0, n_1, \kappa)$.

\subsection{Agent's decision problem}

We are now ready to describe the decision problem faced by the agent
$i$ regarding when to switch from her current location. At each jump
time $t = T^i_\ell$ of $X^i_t$, the agent $i$ receives an immediate
payoff of $Z_t^{(k)}f(N^{(k)}_t)$ and may leave the system with
probability $1-\gamma$. If she does not leave the system, then she has
to decide between two actions ``stay'' or ``switch''. On choosing
``stay'' continues until the next jump time $T^i_{\ell+1}$; on
choosing ``switch'', the decision problem terminates with an immediate
payoff of $C >0$, that does not depend on the state of the location
$k$.

Before proceeding, we provide a brief interpretation of the
termination payoff $C$. Observe that in a finite system, an agent on
switching from a location, moves on to a different location that is
chosen uniformly at random, and continues to accrue payoffs until she
leaves the system. This suggests that one may interpret the termination
payoff $C$ as capturing the notion of a continuation payoff on
switching in the finite system in the context of the limiting infinite
system. Subsequently, we impose conditions on our equilibrium notion
that ensure that indeed $C$ denotes the continuation payoff in the
infinite system.

\emergencystretch=.5em
Given these payoffs and actions for agent $i$, it follows that the
decision problem facing agent $i$ is an optimal stopping problem,
which we denote by $\mathcal{OS}(n_0, n_1, \kappa, C)$. We next
specify the dynamic programming formulation of 
$\mathcal{OS}(n_0, n_1, \kappa, C)$.

Note that in the decision problem, when the
Markov chain $\mathcal{MC}(n_0, n_1, \kappa)$ is in state $(z,n)$,
events occur at rate $\kappa + \mu_{z,1-z} + \lambda n$: with rate
$\kappa$ a new agent arrives, with rate $\mu_{z, 1-z}$ the resource
level changes, with rate $\lambda(n-1)$ one of the other agents either
leaves the system or survives and makes the decision to stay or
switch, and finally with rate $\lambda$, agent $i$ arrives at a jump
time to make a decision herself. Thus, we define the following transition probabilities for the state transition:
\begin{align}
  \label{eq:transition-probabilities}
  \sP_{dec}(z,n) &= \frac{\lambda}{ n\lambda + \mu_{z,1-z}  + \kappa}, \notag \\
  \sP_{exit}(z,n) &= \frac{(n-1)\lambda(1-\gamma)}{n\lambda + \mu_{z,1-z}  + \kappa},\notag \\
  \sP_{sur}(z,n) &= \frac{(n-1)\lambda\gamma}{ n\lambda + \mu_{z,1-z}  + \kappa},\notag \\
  \sP_{res}(z,n) &= \frac{\mu_{z,1-z}}{n\lambda + \mu_{z,1-z}  + \kappa}, \notag\\
  \sP_{arr}(z,n) &= \frac{\kappa}{n\lambda + \mu_{z,1-z}  + \kappa}.
\end{align}
Here, $\sP_{dec}(z,n)$ denotes the probability that the next event
corresponds to agent $i$'s decision epoch, $\sP_{exit}$ denotes the
probability the next event corresponds to one of the other agents
exiting the system, $\sP_{sur}$ denotes the probability the next event
corresponds to one of the other agents persisting in the system,
$\sP_{res}$ denotes the probability the next event corresponds to
change in the resource level, and finally, $P_{arr}$ denotes the
probability that the next event corresponds to a new arrival.

Given $(Z^{(k)}_t, N^{(k)}_t) = (z,n)$ with $T^i_\ell = t$, let
$V(z,n)$ denote the optimal expected total reward of agent $i$ just
after her associated Poisson process $X^i_t$ has undergone a jump, but
prior to her making a decision or receiving any reward. Similarly, let
$\hat{V}(z,n)$ denote the optimal expected total reward of agent $i$
after a jump time, conditional on the decision problem not terminating
either due to the agent leaving the system or choosing to switch to a
different location. Then, we have the following Bellman equation:
\begin{align}
  \label{eq:bellman}
  &V(z,n) = zf(n) + \gamma \max\{ \vhat(z,n), C\}, \notag \\
  &\vhat(z,n) = \sP_{dec}(z,n) V(z,n) + \sP_{exit}(z,n)\vhat(z,n-1)  \notag \\
    & + \sP_{sur}(z,n) \left(\one{\{ n\!>\! n_z \}} + (n\!+\!1\!-\!n_z)\one{\{n\!=\!\lfloor n_z \rfloor \}}\right) \vhat(z, n\!-\!1) \notag \\
    & + \sP_{sur}(z,n) \left( \one{\{n\!<\!\lfloor n_z \rfloor \}} + (n_z\!-\!n) \one{\{ n\!=\!\lfloor n_z \rfloor \}}\right)\vhat(z, n) \notag \\
    & + \sP_{res}(z,n) \vhat(1\!-\!z, n) + \sP_{arr}(z,n) \vhat(z, n\!+\!1).
\end{align}
Here, the first equation follows from the fact that subsequent to a
jump time, the agent receives an immediate payoff equal to
$zf(n)$. Following this, she continues with survival probability
$\gamma$, and has to make a decision to stay, which gets her expected
payoff equal to $\hat{V}(z,n)$ or switch, which gets her an expected
payoff equal to $C$. The second equation relates $\hat{V}(z,n)$ to the
agent's expected payoff subsequent to various events that can occur at
the next transition. For a solution $V$ and $\hat{V}$ to the Bellman's
equation, an optimal strategy $\xi^i$ for the agent $i$ requires agent
$i$ to stay if $\hat{V}(z,n) > C$, and to switch if $\hat{V}(z,n) < C$
(any mixed action is optimal if $\hat{V}(z,n) = C$). Let
$\mathcal{OPT}(n_0,n_1,\kappa,C)$ denote the set of all optimal
strategies (not necessarily threshold strategies) for the agent's
decision problem $\mathcal{OS}(n_0,n_1,\kappa, C)$.

\subsection{Mean field equilibrium}

Given the Markov chain $\mathcal{MC}(n_0, n_1, \kappa)$ and an agent's
decision problem $\mathcal{OS}(n_0, n_1, \kappa, C)$ , we are now
ready to state the equilibrium conditions on the limiting
system. First, in the infinite system, we require the agents'
strategies to be in equilibrium. Since
$\mathcal{MC}(n_0, n_1, \kappa)$ describes the dynamics of a location
where all agents other than agent $i$ use the threshold policy
$(n_0, n_1)$, for equilibrium we must impose the condition that
$(n_0, n_1)$ is an optimal strategy for the agent's decision
problem. This leads to the following condition:
\begin{align}
\label{eq:best-response}
(n_0, n_1) \in \mathcal{OPT}(n_0,n_1,\kappa, C).
\end{align}

If all agents at location $k$, including agent $i$, follow the
threshold policy $(n_0, n_1)$, then the transitions in
$(Z^{(k)}_t, N^{(k)}_t) = (z,n)$ follow a Markov chain with transition
rate matrix $\overline{\sQ}$ that is equal to $\sQ$ except for the
transition $\overline{\sQ}((z,n) \to (z,n-1))$ which is equal to
$\overline{\sQ}((z,n) \to (z,n-1)) = \lambda n (1- \gamma + \gamma
(\one\{n > n_z\} + (n+1-n_z)\one\{n = \lfloor n_z \rfloor\} ))$.
This is because the arrival and the changes in the resource level
occur at the same rate, but now any one of the agents at location $k$
might choose to leave the location, as opposed to any one of the
agents other than agent $i$ as defined in $\sQ$. Denote this Markov
chain by $\overline{\mathcal{MC}}(n_0, n_1, \kappa)$ and let $\pi$
denote an invariant distribution of this chain:
\begin{align}
\label{eq:steady-state}
\pi^{\mathsf{T}}\overline{\sQ} = 0.
\end{align}
In a large system, a natural requirement to impose is that the
invariant distribution of a single location $k$ equals the steady
state empirical distribution of the resource level and the number of
agents across all locations. Requiring this condition to hold leads to
two consequences. First, because in the infinite system the ``agent
density'', i.e., mean number of agent across all locations, is equal
to $\beta$, this implies that the expected number of agents at
location $k$ must equal $\beta$:
\begin{align}
\label{eq:expectation-steady}
\sum_{(z,n)\in\sets} n\pi(z,n) = \beta.
\end{align}
Note that this equation imposes a restriction on the arrival rate
$\kappa$ of the Markov chain $\mathcal{MC}(n_0, n_1, \kappa)$. In
particular, it requires the arrival rate to be such that in steady
state the expected number of agents at each location is equal to
$\beta$.

The second condition imposes a restriction on the immediate
termination reward on switching $C$. Recall that we interpret $C$ as
modeling the optimal continuation payoff on switching in the finite
system. Since the empirical distribution of the states of other
locations is given by $\pi$, the optimal expected reward an agent can
obtain on switching is given by
$\sum_{(z,n)\in\sets} \pi(z,n) \vhat(z,n+1)$. This is because, after
the agent moves to a location in state $(z,n)$, which happens with
probability $\pi(z,n)$, the number of agents at that location becomes
$n+1$, and the expected payoff to that agent is $\hat{V}(z,n+1)$. We
require that this quantity equals the immediate reward $C$:
\begin{align}\label{eq:leave-payoff-consistency}
    C = \sum_{(z,n)\in\sets} \pi(z,n) \vhat(z,n+1).
\end{align}

Given these consistency conditions, we are now ready to define a mean
field equilibrium for the infinite system:
\begin{definition}[Mean Field Equilibrium] A mean field equilibrium is
  characterized by a threshold strategy $(n_0, n_1)$, an arrival rate
  $\kappa>0$, an distribution $\pi$ over $\sets$, and an immediate
  reward $C>0$, such that the set of equations \eqref{eq:best-response},\eqref{eq:steady-state},\eqref{eq:expectation-steady}, and \eqref{eq:leave-payoff-consistency} hold. 
\end{definition}

Note that as opposed to a PBE, a mean field equilibrium adopts a
fairly natural and simple model of agent behavior, where each agent
needs to keep track only of current state and the number of agents at
the location she is in, along with the immediate payoff for switching.

\section{Main results}
\label{sec:main-thm}

Having defined the equilibrium concept, we now consider the problem of
existence of a mean field equilibrium in the infinite system. 

We begin with the following lemma that shows that for any level of
resource at a location, the value function $\hat{V}(z,n)$ is
non-increasing with the number of agents at that location. 
The proof may be found in Appendix~\ref{ap:few-proofs}.
\begin{lemma}
  \label{lem:value-decreasing}
  For each $z\in \{0,1\}$, the value function $\vhat(z,n)$ is
  non-increasing in $n$.
\end{lemma}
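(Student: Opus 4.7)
My plan is to establish the monotonicity by value iteration combined with a pathwise coupling of the location dynamics $\mathcal{MC}(n_0,n_1,\kappa)$. Define $\vhat^{(0)}\equiv 0$ and, for each $m\geq 0$, set
\[
V^{(m)}(z,n) \defeq zf(n) + \gamma\max\{\vhat^{(m)}(z,n),\,C\}, \qquad
\vhat^{(m+1)}(z,n) \defeq \expec_{(z,n)}\bigl[V^{(m)}(Z^{(k)}_{T^i_1}, N^{(k)}_{T^i_1})\bigr],
\]
where the expectation is over the first jump time $T^i_1$ of agent $i$'s Poisson clock and the chain $\mathcal{MC}(n_0,n_1,\kappa)$ started at $(z,n)$. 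Since $f\leq f(0)<\infty$ and $\gamma<1$, this iteration is a $\gamma$-contraction in the sup norm, so $\vhat^{(m)}\to\vhat$ pointwise and its fixed point coincides with the solution of the Bellman equation \eqref{eq:bellman}. It therefore suffices to show by induction on $m$ that $\vhat^{(m)}(z,\cdot)$, and equivalently $V^{(m)}(z,\cdot)$, is non-increasing; the base case is immediate.

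For the inductive step, assuming $V^{(m)}(z,\cdot)$ is non-increasing, I need $n\mapsto \expec_{(z,n)}[V^{(m)}(Z^{(k)}_{T^i_1},N^{(k)}_{T^i_1})]$ to be non-increasing in $n$. I would prove this by coupling two copies of $\mathcal{MC}(n_0,n_1,\kappa)$, one started from $(z,n)$ and one from $(z,n+1)$: share a single $Z$-process and a common Poisson arrival stream of rate $\kappa$, and couple the departures as follows. Writing $r(z,m)\defeq \sQ((z,m)\to(z,m-1))$ for the total downward transition rate, drive common exits from a clock of rate $r(z,N_t)$ that decrements both chains, and superimpose an independent clock of rate $r(z,N'_t)-r(z,N_t)$ that decrements only the larger chain. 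For this coupling to be well defined and for each marginal to retain its correct rate, one must verify that $r(z,\cdot)$ is non-decreasing; this is a direct case check on the three regimes $n<\lfloor n_z\rfloor$, $n=\lfloor n_z\rfloor$, $n>\lfloor n_z\rfloor$ using the formulas in Section~\ref{sec:limiting}. A simple per-event analysis then shows that $N'_t-N_t\in\{0,1\}$ throughout (the extra clock is dormant when $N'_t=N_t$, and otherwise can only return them to equality), so in particular $N_{T^i_1}\leq N'_{T^i_1}$ almost surely.

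With this coupling, both $\vhat^{(m+1)}(z,n)$ and $\vhat^{(m+1)}(z,n+1)$ become expectations on the coupled probability space sharing the same $Z^{(k)}_{T^i_1}$. The inductive hypothesis combined with $N_{T^i_1}\leq N'_{T^i_1}$ yields $V^{(m)}(Z^{(k)}_{T^i_1},N_{T^i_1}) \geq V^{(m)}(Z^{(k)}_{T^i_1},N'_{T^i_1})$ pointwise, so taking expectations gives $\vhat^{(m+1)}(z,n)\geq \vhat^{(m+1)}(z,n+1)$; passing to the limit $m\to\infty$ preserves the inequality. I expect the principal technical obstacle to be the verification that $r(z,\cdot)$ is non-decreasing across the threshold, since the mixed decision at $n=\lfloor n_z\rfloor$ introduces a non-trivial interpolation between the two neighboring regimes that must be handled carefully; the contraction argument, the coupling of Poisson races, and the final limit are otherwise routine.
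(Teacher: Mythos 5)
Your proposal is correct and follows essentially the same route as the paper: value iteration for the Bellman equation \eqref{eq:bellman}, induction on $m$ using monotonicity of $f$, and a sample-path coupling of two copies of $\mathcal{MC}(n_0,n_1,\kappa)$ started at $(z,n)$ and $(z,n+1)$ with a shared resource process, giving $N^1_t \leq N^2_t$ for all $t$ and hence the ordering of $\vhat^{(m+1)}$. The only difference is that you spell out the coupling construction that the paper omits ``due to space restrictions,'' and your monotonicity check on the downward rate $r(z,\cdot)$ across the regimes $n<\lfloor n_z\rfloor$, $n=\lfloor n_z\rfloor$, $n>n_z$ does go through, so the argument is complete.
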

Using this lemma, it is straightforward to show that for any level of
resource $z \in \{0,1\}$, if it is optimal to switch when the number
of agents at the location is $n$, then it is still optimal to switch
when the number of agents is greater than $n$. From this, we obtain
the first result of this section.
\begin{theorem}
\label{thm:threshold}
For any $(n_0, n_1) \in \reals^2_+$, $\kappa > 0$ and $C >0$, there
always exists an optimal strategy with a threshold structure for the
decision problem $\mathcal{OS}(n_0, n_1, \kappa, C)$.
\end{theorem}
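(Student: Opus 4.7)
The plan is to derive the result as a direct consequence of Lemma~\ref{lem:value-decreasing} combined with the Bellman equation~\eqref{eq:bellman}. Since $\mathcal{OS}(n_0,n_1,\kappa,C)$ is a Markov decision problem on the countable state space $\{0,1\}\times\naturals$ with uniformly bounded per-decision rewards and geometric termination, standard dynamic programming arguments guarantee the existence of a Markov optimal policy realizing the action prescribed by the first line of \eqref{eq:bellman}: at state $(z,n)$, staying is optimal when $\hat{V}(z,n) > C$, switching is optimal when $\hat{V}(z,n) < C$, and either action is optimal when $\hat{V}(z,n) = C$.

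Next, I would use Lemma~\ref{lem:value-decreasing}, which asserts that $\hat{V}(z,\cdot)$ is non-increasing for each fixed $z\in\{0,1\}$, to extract a threshold. For each $z\in\{0,1\}$, define
\[
m_z \defeq \min\{n\in\naturals : \hat{V}(z,n) \leq C\},
\]
with the convention $m_z=+\infty$ when the set is empty. Monotonicity gives $\hat{V}(z,n) > C$ for every $n<m_z$ and $\hat{V}(z,n)\leq C$ for every $n\geq m_z$, so the set of states at which switching is optimal is exactly $\{(z,n) : n\geq m_z\}$. Taking the candidate thresholds $(n_0^*, n_1^*) \defeq (m_0, m_1)$ (both integer-valued), the paper's threshold policy reduces to the deterministic rule ``stay iff $n<m_z$'' since the mixing probability $n_z^* - \lfloor n_z^* \rfloor$ vanishes. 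One then checks that this rule chooses a DP-optimal action at every state (strictly optimal below $m_z$ and weakly optimal at or above $m_z$), so it constitutes an optimal strategy for $\mathcal{OS}(n_0,n_1,\kappa,C)$.

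The one subtle point is the degenerate case $m_z=+\infty$, in which $\hat{V}(z,n) > C$ for every $n$ and the optimal behavior is to stay unconditionally; I would interpret this as a threshold at infinity, which still falls under the umbrella of ``threshold structure'' in the theorem statement. Apart from this case, no real difficulty arises: the substantive monotonicity work has already been done in Lemma~\ref{lem:value-decreasing}, and the theorem follows almost immediately.
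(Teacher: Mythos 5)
Your proposal is correct and follows essentially the same route as the paper: Lemma~\ref{lem:value-decreasing} makes the set of states where switching is optimal upward closed in $n$ for each $z$, and reading off the integer thresholds $(m_0,m_1)$ yields an optimal threshold policy, which is exactly the paper's (brief) argument. The only wrinkle is your degenerate case $m_z=+\infty$: the paper's definition requires finite thresholds, but this case is in fact ruled out by the estimate underlying Lemma~\ref{lem:compactness-threshold} (for $n$ large the value of staying is at most $g(n)+\gamma C<C$), so your interpretation does not affect correctness.
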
 
The preceding theorem suggests that set of threshold strategies is
closed under best-responses: if all agents other than agent $i$ adopt
the same threshold strategy, then it is optimal for agent $i$ to also
follow a threshold strategy. Thus, it suffices to focus on the set of
optimal threshold strategies for the agent decision problem, which we
denote by $\mathcal{T}(n_0, n_1, \kappa, C)$. Note that
$\mathcal{T}(n_0, n_1, \kappa, C)$ can be characterized as a subset of
$\reals^2_+$ corresponding to the values of the thresholds of the
optimal threshold strategies. In Lemma~\ref{lem:thresholds-convex} in the
Appendix~\ref{ap:few-proofs}, we show that $\mathcal{T}(n_0, n_1, \kappa, C)$
is a convex set.

Building on this result, we obtain the main theorem of our paper.
\begin{theorem}
  \label{thm:existence} For any $\lambda>0$, $\beta > 0$ and
  $\mu_{01}, \mu_{10} > 0$, there exists a mean field equilibrium for
  the infinite system.
\end{theorem}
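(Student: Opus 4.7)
The plan is to reduce the mean field equilibrium conditions to a fixed-point problem over $(n_0, n_1, C)$, by showing that $\kappa$ and $\pi$ can be extracted as continuous functions of the thresholds, and then invoking Kakutani's fixed-point theorem. First I would establish a priori bounds that confine the search to a compact convex set $\mathcal{X}$. Any feasible continuation payoff satisfies $C \leq \|f\|_\infty/(1-\gamma)$, and because $f(n)\to 0$ as $n\to\infty$, any optimal threshold can be shown to lie in $[0,\bar n]$ for some $\bar n<\infty$; lower bounds on $C$ (ensuring the best-response thresholds remain bounded) come from the structural fact that $C$ equals an expectation of a strictly positive value function in any candidate equilibrium.

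Next I would extract $\kappa$ and $\pi$ as continuous functions of the thresholds. For each $(n_0, n_1) \in [0,\bar n]^2$ and every $\kappa > 0$, the chain $\overline{\mathcal{MC}}(n_0, n_1, \kappa)$ is positive recurrent because its total departure rate at $(z,n)$ is at least $\lambda n(1-\gamma)$, so it has a unique stationary distribution $\pi$. A global-balance-of-flows argument yields $\lambda(1-\gamma)\sum n\pi(z,n) \leq \kappa \leq \lambda \sum n\pi(z,n)$, so the stationary mean ranges monotonically over $(0,\infty)$ as $\kappa$ varies; stochastic monotonicity pins down a unique $\kappa(n_0, n_1)$ with $\sum n\pi(z,n) = \beta$. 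Continuous dependence of $\overline{\sQ}$ on its parameters, together with uniform tail control on $\pi$, yields continuity of $\kappa(\cdot)$ and $\pi(\cdot)$ on $[0,\bar n]^2$.

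I would then set up a joint fixed point capturing both best-response and continuation-payoff consistency. For each $(n_0, n_1, C) \in \mathcal{X}$, write $\kappa = \kappa(n_0, n_1)$ and $\pi = \pi(n_0, n_1)$, and let $\vhat(\cdot; n_0, n_1, \kappa, C)$ solve \eqref{eq:bellman} (well-defined since the immediate reward is bounded and $\gamma < 1$). Define
\begin{align*}
\Phi(n_0, n_1, C) \defeq \mathcal{T}(n_0, n_1, \kappa, C) \times \Bigl\{ \sum_{(z,n)\in\sets} \pi(z,n)\, \vhat(z,n+1) \Bigr\}.
\end{align*}
By Theorem~\ref{thm:threshold} and Lemma~\ref{lem:thresholds-convex}, $\Phi$ takes nonempty compact convex values in $\mathcal{X}$, and a Berge-type maximum argument combined with continuity of $\vhat$, $\kappa$, and $\pi$ in the parameters yields upper hemicontinuity. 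Kakutani's fixed-point theorem produces $(n_0^*, n_1^*, C^*) \in \Phi(n_0^*, n_1^*, C^*)$, and the tuple $(n_0^*, n_1^*, \kappa(n_0^*, n_1^*), \pi(n_0^*, n_1^*), C^*)$ satisfies all four equilibrium conditions \eqref{eq:best-response}--\eqref{eq:leave-payoff-consistency}.

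The main obstacle I expect is the technical analysis on the infinite state space $\sets = \{0,1\}\times\naturals$. Uniform geometric-tail estimates on $\pi$ over the compact parameter region (for instance, by stochastic dominance by an $M/M/\infty$-type queue with uniformly bounded arrival rate) are needed to guarantee that sums such as $\sum n\pi(z,n)$ and $\sum \pi(z,n)\vhat(z,n+1)$ depend continuously on the parameters, and to justify interchanging limits with these infinite sums. Establishing this uniform tightness, along with the accompanying a priori bound on $\bar n$, is the technical crux of the argument.
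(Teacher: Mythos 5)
Your proposal follows essentially the same route as the paper: it restricts attention to a compact box $[0,M]^2\times[\ubar{C},\bar{C}]$ (with $\bar{C}=f(1)/(1-\gamma)$, a positive lower bound on the continuation payoff, and a threshold bound coming from $f(n)\to 0$), pins down $\kappa$ and $\pi$ uniquely from the density condition \eqref{eq:expectation-steady} via positive recurrence and monotonicity of the stationary mean in $\kappa$ (with $M/M/\infty$-type dominance for tail control), and then applies Kakutani's fixed point theorem to the correspondence $(n_0,n_1,C)\mapsto \mathcal{T}(n_0,n_1,\kappa,C)\times\{\tilde{C}\}$, using Theorem~\ref{thm:threshold} and Lemma~\ref{lem:thresholds-convex} for nonempty convex values and Berge/contraction-type arguments for upper hemicontinuity. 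This is exactly the paper's construction of the map $\mathscr{T}$ and its fixed-point argument, so the plan is correct and not materially different.
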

We emphasize that the existence of a mean field equilibrium is
obtained under very general conditions, requiring only that the reward
function $f(n)$ is non-increasing in the number of agents $n$ at a
location, and converging to zero as $n$ tends to infinity. Our proof
technique can be extended to cases where the resource level
$Z^{(k)}_t$ at any location can take values in any finite subset of
$\reals_+$.

The full proof of Theorem~\ref{thm:existence} is technical and is omitted due
to space constraints.  Instead, in the next section, we sketch the main ideas
behind the proof and provide a brief outline.  Selected portions of the proof
may be found in Appendix~\ref{ap:few-proofs}, and a full proof will appear
later in a longer version of the paper.

\section{Proof outline}
\label{sec:proof-outline}

The proof of Theorem~\ref{thm:existence} follows by applying
Kakutani's fixed point theorem on carefully defined map $\mathscr{T}$,
whose fixed points correspond to the mean field equilibria of the
infinite system. To define the map $\mathscr{T}$ requires a number of
intermediate steps, which we outline below.

The first step of the proof involves showing that given any
$(n_0, n_1) \in \reals^2_+$ and $C>0$, there exists a unique $\kappa$
and distribution $\pi$ such that $\pi$ is the unique invariant
distribution of the Markov chain
$\overline{\mathcal{MC}}(n_0, n_1, \kappa)$ (i.e., $\pi$ satisfies
equation \eqref{eq:steady-state}), and for which condition
\eqref{eq:expectation-steady} holds. This step itself involves first
showing that for any $\kappa>0$, the Markov chain
$\overline{\mathcal{MC}}(n_0,n_1,\kappa)$ is irreducible and positive
recurrent, and hence has a unique stationary distribution
$\pi_\kappa$. To show this, we use coupling arguments to bound the
Markov chain $\overline{\mathcal{MC}}(n_0,n_1, \kappa)$ between two $M/M/\infty$
queues. Then, we show that the quantity
$\sum_{(z,n)\in\sets} n\pi_\kappa(z,n)$ is strictly increasing and
continuous over an interval of values of $\kappa$, which suffices to
show that there exists a value of $\kappa$ satisfying
\eqref{eq:expectation-steady}.

We then compute the value function $\hat{V}$ satisfying the Bellman's
equation \eqref{eq:bellman} for the decision problem
$\mathcal{OS}(n_0,n_1, \kappa, C)$, where $\kappa$ is the value of the
arrival rate obtained in the first step. Using this value function, we
identify the set $\mathcal{T}(n_0, n_1, \kappa, C)$ of optimal
threshold strategies.

Finally, using the value function $\hat{V}$ and the invariant
distribution $\pi$ from the first step, we compute the total expected
payoff of an agent subsequent to switching to a different location
chosen uniformly at random, defined as
$\tilde{C}\defeq \sum_{(z,n)\in\sets} \pi(z,n) \vhat(z,n+1)$.

The map $\mathscr{T}$ is then defined as follows: for each
$(n_0, n_1) \in \reals^2_+$ and $C>0$, we define
$\mathscr{T}(n_0, n_1, C) = \mathcal{T}(n_0, n_1, \kappa, C) \times
\{\tilde{C}\}$. We depict the map pictorially in Fig.~\ref{fig:map-t}.

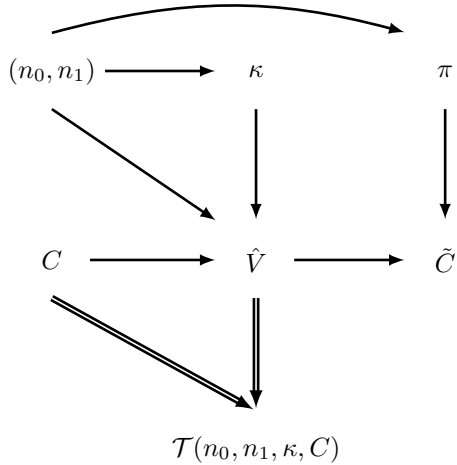
\begin{figure}
  \centering
  \begin{tikzpicture}[plainnode/.style={draw=none, fill=none, minimum size = 10mm}, singlearrow/.style={thick,->,shorten >=1pt, line width = 1pt, >=latex}, doublearrow/.style={thick, ->, shorten >= 1pt, line width = 1pt, >=latex, double}]
    \node[plainnode]      (originalthreshold)                        {$(n_0, n_1)$};
    \node[plainnode]      (kappa)             [right=1.5cm of originalthreshold] {$\kappa$};
    \node[plainnode]      (pi)                [right=1.5cm of kappa] {$\pi$};
    \node[plainnode]      (C)                 [below=1.5cm of originalthreshold] {$C$};
    \node[plainnode]      (vhat)              [below=1.5cm of kappa] {$\vhat$};
    \node[plainnode]      (ctilde)            [below=1.5cm of pi] {$\tildec$};
    \node[plainnode]      (newthreshold)      [below=1.5cm of vhat] {$\mathcal{T}(n_0, n_1, \kappa,C)$};
     
    \draw[singlearrow] (originalthreshold.east) -- (kappa.west);
    \draw (originalthreshold.north) edge[out=15, in=165, singlearrow] (pi.north west);
    \draw[singlearrow] (C.east) -- (vhat.west);
    \draw[singlearrow] (vhat.east) -- (ctilde.west);
    \draw[singlearrow] (kappa.south) -- (vhat.north);
    \draw[singlearrow] (pi.south) -- (ctilde.north);
    \draw[singlearrow] (originalthreshold.south) -- (vhat.north west);
    \draw[doublearrow] (vhat.south) -- (newthreshold.north);
    \draw[doublearrow] (C.south) -- (newthreshold.north);
  \end{tikzpicture}
  \caption{Illustration of the correspondence $\mathscr{T}(n_0, n_1, C) = \mathcal{T}(n_0,n_1,\kappa,C) \times \{\tilde{C}\}$.}
\label{fig:map-t}
\end{figure}

By definition, any fixed point $(n_0,n_1, C)$ of the map $\mathscr{T}$
must satisfy $(n_0, n_1) \in \mathcal{T}(n_0, n_1, \kappa,C)$, and
$C = \tilde{C}$. This implies that $(n_0, n_1)$ is an optimal
threshold strategy for the decision problem
$\mathcal{OS}(n_0, n_1, \kappa, C)$, and hence is also an optimal
strategy and equation \eqref{eq:best-response} holds. Recall that the
arrival rate $\kappa$ and the invariant distribution $\pi$ satisfy
equations \eqref{eq:steady-state} and
\eqref{eq:expectation-steady}. Finally, $\tilde{C}=C$ implies that
equation \eqref{eq:leave-payoff-consistency} holds. From this we
conclude that $(n_0, n_1)$, $C$ and resulting $\kappa$ and the
invariant distribution $\pi$ together constitute a mean field
equilibrium. Thus each fixed point of the map $\mathscr{T}$
corresponds to a mean field equilibrium.

To show that the map $\mathscr{T}$ has a fixed point, we apply
Kakutani's fixed point theorem. For this, we first identify a compact
set $\mathcal{X} \defeq [0,M]^2\times [\ubarc, \barc]$ of
$\reals_0^3$, with $0 < \ubarc < \barc$ and $M > 0$, and show that
$\mathscr{T}(\mathcal{X}) \subseteq \mathcal{X}$. We include the proof
of this step in Appendix~\ref{ap:compactness-map}. Second, we show that
the map is upper hemicontinuous under the Euclidean topology. To show
this, we need to show that each of the intermediate maps in
Fig.~\ref{fig:map-t} is continuous (or upper hemicontinuous if the map
is a correspondence) under appropriate topologies. We apply Berge's
Maximum Theorem \cite{berge1963topological} to show the map from
$(n_0, n_1,C)$ to $(\kappa, \pi)$ is continuous, and the continuity of
fixed points of continuous contraction mappings \cite{granas2013fixed}
to show the map from $(n_0,n_1, \kappa, C)$ to the value function
$\hat{V}$ is continuous. Finally, we show that the image of
$\mathscr{T}$ is convex and non-empty for all values of
$(n_0, n_1,C)$. Then, it follows by a direct application of Kakutani's
fixed point theorem that the map $\mathscr{T}$ has a fixed point, and
consequently a mean field equilibrium exist in the infinite
system.

\section{Comparative statics}
\label{sec:numerics}

Having shown the existence of a mean field equilibrium for the infinite
system, we now study how the model parameters and the reward function
$f$ affect the equilibrium agent behavior. We perform this
investigation numerically by computing the mean field equilibrium for
a range of parameter values, and studying how the equilibrium
thresholds $(n_0, n_1)$ and the stationary distribution vary. 

As our model is invariant to the proportional scaling of the decision
epoch rates $\lambda$, and the holding rates $\mu_{0,1}$ and
$\mu_{1,0}$, we assume for our computations that $\lambda=1$ and vary
the holding rates. Further, we restrict our attention to the symmetric
case where the holding rates are equal: $\mu_{0,1} = \mu_{1,0} = \mu$.
In all our computation, we set the agent density $\beta = 20$, and the
survival probability $\gamma = 0.95$. For this, we study the mean
field equilibrium in three reward settings: (1) $f(n) = 1/n$ for all
$n$; (2) $f(n) = 1/n^2$ for all $n$; and (3) $f(n) = 1/\sqrt{n}$ for
all $n$.

Before we discuss the results of our numerical investigation, we
briefly describe our approach to compute an (approximate) mean field
equilibrium of the infinite system. 

\subsection{Computation of MFE}

Recall that the mean field equilibria of our model are the fixed
points of the correspondence $\mathscr{T}$. We thus seek to find
(approximate) fixed points of this map. To do this, we adopt a
brute-force approach. We first truncate the state space $\sets$ of the
agent decision problem to $\sets_{200}= \{0,1\} \times \{0,1,\cdots, 199\}$.  We
restrict the thresholds $(n_0,n_1)$ to grid of values in $[0,50]^2$,
where we set the grid resolution $r$ adaptively over different
runs. We do a similar adaptive meshing of the set of values of the
immediate payoff $C$. Having restricted the set of values of
$(n_0,n_1)$ and $C$ thus, we search over all values to find lie in the
image of $\mathscr{T}$, within some pre-specified tolerance
$\epsilon$. We describe this process in detail:
\begin{enumerate}
\item For each value of $(n_0, n_1)$, we perform a binary search on
  $\kappa$ to find a value for which the stationary distribution $\pi$
  of the Markov chain $\mathcal{MC}(n_0, n_1, \kappa)$ restricted to
  $\sets_{200}$ satisfies equation $\eqref{eq:expectation-steady}$
  with a tolerance $\epsilon$. (The stationary distribution $\pi$ is
  obtained by solving the set of linear equations
  \eqref{eq:steady-state}.)
\item For this value of $\kappa$, $(n_0,n_1)$ and each value of $C$,
  we perform value iteration to compute the value function $\hat{V}$,
  again with a tolerance of $\epsilon$, and compute the set of optimal
  thresholds $\mathcal{T}(n_0,n_1,\kappa, C)$. Let $\textsf{dist}(n_0,n_1)$
  denote the distance between $(n_0,n_1)$ and the
  $\mathcal{T}(n_0, n_1, \kappa,C)$ under the Euclidean norm. (Note
    that the latter set is convex and the distance is well defined.)
\item Next, using the stationary distribution $\pi$ and the value
  function $\hat{V}$, we compute the immediate payoff $\tilde{C}$.
\item For each value of $n_0, n_1$ and $C$, we compute
  $d(n_0, n_1, C) \defeq \|C - \tilde{C}\| +
  \textsf{dist}(n_0,n_1)$.
  We output the value of $(n_0,n_1,C)$ that minimizes $d(n_0,n_1,C)$
  over all values.
\end{enumerate}
To make the brute-force search efficient, we run this algorithm
sequentially and adaptively by first identifying candidate regions
where equilibria might exist, and restricting the search to those
regions with lower tolerance and finer grid values.

\subsection{Numerical results}

In Table~\ref{table} we report computationally determined values for three different reward functions, obtained over five different rates for the underlying resource process.

\begin{table}[tb]
\centering
\begin{tabular}{|c|c|c|c|}
\hline
$\mu$ & $f(n) = 1/\sqrt{n}$ & $f(n)=1/n$ & $f(n) = 1/n^2$ \\
\hline
0.1 & 2.80, (1.0, 43.9) & 1.98, (1.0, 4.0)  & 0.14, (1.0, 7.7)\\
\hline
0.5 & 2.39, (4.3, 43.8) & 1.72, (1.0, 4.0) & 0.25, (1.0, 4.7)\\
\hline
1.0 & 2.63 , (1.0, 27.2) & 0.96, (1.0, 10.4) & 0.18, (1.0, 5.0) \\
\hline
10.0 & 2.37, (11.0, 18.2) & 0.93, (4.0, 7.1) & 0.16, (3.0, 5.0) \\
\hline
100.0 & 2.46, (11.0, 12.0) & 0.97, (4.0, 4.0) & 0.80, (1.0, 8.0)\\
\hline
\end{tabular}
\caption{Computationally determined approximate equilibrium values of the payoff for switching $C$,
and the thresholds $n_0$ and $n_1$ used to decide whether to switch or not.  Values are reported in the table as $C, (n_0,n_1)$, for each value for reward function $f$ and the rate $\mu$ at which the resource level changes. \label{table}}
\end{table}

For large values of $\mu$, we see $n_0$ and $n_1$ are close. This is natural because large values of $\mu$ imply that the resource level is changing very quickly, and so the level of resource at the time of the decision has little impact on what the resource level will be at the next time the agent receives a reward.  Thus, the current level of resource $z$ has little impact on the threshold $n_z$. On the other hand, for small values of $\mu$, the thresholds differ significantly with the resource level.

We also observe that, when comparing reward functions $f(n) = 1/\sqrt{n}$, $1/n$, and $1/n^2$, when $f$ decreases more quickly, agents are more willing to switch (the threshold for switching is lower), and the payoff for switching is also lower.

\section{Conclusion}
\label{sec:conclusion}

We have studied a multi-agent location-specific resource-sharing game, and have
established the existence of an equilibrium in this game, and have characterized
each agent's policy in this equilibrium as being a threshold policy.  This
result provides economic insight into such multi-agent resource-sharing games,
and also allows evaluating the effects of designing and modifying these games,
through subsidies or penalties added to natural occurring rewards and costs.

This work also sets the stage for studying information sharing in multi-agent
resource-sharing systems. Our current analysis assumes that agents observe only
the number of other agents and resource level at their current location, and
the locations they have visited in the past.  One may extend this model to
allow for information sharing among the agents as well as between the agents
and a central planner who has access to the current state(s) of the location(s)
that the agent mights switch to. A first question of interest would then be
whether such information sharing necessarily improves social welfare, or
whether it can in fact degrade it.  A second question is how this information
sharing mechanism should be designed to maximize social welfare.


\newpage

\appendixtitleon

\begin{appendices}

\section{Proofs}
\label{ap:few-proofs}

In this appendix we provide proofs of selected results discussed in the main text.
A full proof of the main theorem is technical, and is omitted due to space constraints.
This full proof will appear in a future version of the paper.

\subsection{Structure of optimal strategies}
\smallskip

\begin{proof}[of Lemma \ref{lem:value-decreasing}]
  By Markov property we may assume at $t=0$ the agent makes his
  current decision, and let $t'$ be the time his next decision epoch
  starts. We denote $\nu_{(z,n)}$ as the distribution of
  $(Z_{t'}, N_{t'})$ given $(Z_0, N_0) = (z,n)$, then we can write
    \begin{equation*}
        \vhat(z,n) = \expec_{\nu_{(z,n)}} V(Z_{t'}, N_{t'}).
    \end{equation*}

    Let $\vhat^{(0)}(z,n) = 0$ for all $z$ and $n$. Compute
    $ V^{(m)}(z,n)$ and $\vhat^{(m+1)}(z,n)$ using value iteration for
    the Bellman's equation \eqref{eq:bellman}.
    \begin{align*}
          V^{(m)}(z,n) &=  zf(n) + \gamma \max\{\vhat^{(m)}(z,n), C\}, \\
          \vhat^{(m+1)}(z,n) &=  \expec_{\nu_{(z,n)}}V^{(m)}(Z_{t'}, N_{t'}).    
    \end{align*}
    By convergence of value iteration we have
    $\|V^{(m)}-V\|_\infty \rightarrow 0$ and
    $\|\vhat^{(m)} - \vhat\|_\infty \rightarrow 0$ as
    $m\rightarrow\infty$.  Therefore, it suffices to show
    $\forall m \in\mathbb{N}_0$, for $z\in\{0,1\}$, $V^{(m)}(z,n)$ and
    $\vhat^{(m)}(z,n)$ are non-increasing in $n$.  We can prove this by
    induction on $m$.
 
    The base case follows trivially. Now assume $\vhat^{(m)}(z,n)$ is
    non-increasing in $n$ for $z\in\{0,1\}$, for some
    $m \in \naturals_0$.  From this it is straightforward to conclude
    that $V^{(m)}(z,n) = zf(n) + \gamma\max\{\vhat^{(m)}(z,n), C\}$
    must be non-increasing in $n$ since both $f(n)$ and
    $\vhat^{(m)}(z,n)$ are non-increasing in $n$. Thus, we only need
    to show that $\vhat^{(m+1)}(z,n)$ is also non-increasing in $n$.
    
    Observe showing
    $\vhat^{(m+1)}(z,n) \geq \vhat^{(m+1)}(z,n+1)$ is equivalent to
    showing $\expec_{\nu_{(z,n)}}V^{(m)}(Z_{t'}, N_{t'})$
    $\geq \expec_{\nu_{(z,n+1)}}V^{(m)}(Z_{t'}, N_{t'})$.  We show
    this using a sample path argument by considering two
    processes. Let $(Z_t^1, N_t^1)$ be a copy of
    $\mathcal{MC}(n_0, n_1, \kappa)$ that starts at
    $(Z_0^1, N_0^1) = (z,n)$ and let $(Z_t^2, N_t^2)$ be a copy of
    $\mathcal{MC}(n_0, n_1, \kappa)$ that starts at
    $(Z_0^2, N_0^2) = (z, n+1)$. By carefully coupling the two
    processes, the details of which we omit due to space restrictions,
    it can be shown that for all $t\geq 0$, $Z_t^1 = Z_t^2$ and
    $N_t^1 \leq N_t^2$. Since $(Z^1_{t'}, N^1_{t'}) \sim \nu_{(z,n)}$
    and $(Z^2_{t'}, N^2_{t'}) \sim \nu_{(z, n+1)}$, we have
    \maskforsubmission{ Note that the two processes is a special case
      of Lemma~\ref{lem:compare-different-kappa}, where
      $\kappa_1 =\kappa_2 = \kappa$, and $N_0^2 = N_0^1 + 1$. Thus, we
      could construct the two processes in the same way as in
      Lemma~\ref{lem:compare-different-kappa}, and obtain the result
      that $Z_t^1 = Z_t^2$ and $N_t^1 \leq N_t^2$ in all sample paths
      at all $t \geq 0$. Note that
      $(Z^1_{t'}, N^1_{t'}) \sim \nu_{(z,n)}$ and
      $(Z^2_{t'}, N^2_{t'}) \sim \nu_{(z, n+1)}$, hence we have }
\begin{align*}
  \expec_{\nu_{(z,n)}}V^{(m)}(Z_{t'}, N_{t'}) = \expec [V^{(m)}(Z_t^1, N_t^1)],
\end{align*}
and
\begin{align*}
  \expec_{\nu_{(z,n+1)}}V^{(m)}(Z_{t'}, N_{t'}) = \expec [V^{(m)}(Z_t^2, N_t^2)].
\end{align*}
Since $V^{(m)}(z,n)$ is non-increasing in $n$ for both $z=0,1$,
$V^{(m)}(Z_t^2, N_t^2) \leq V^{(m)}(Z_t^1, N_t^1)$ in all sample
paths, and therefore
$\expec [V^{(m)}(Z_t^1, N_t^1)] \geq \expec[V^{(m)}(Z_t^2, N_t^2)]$,
which completes the proof.
\end{proof}

We now consider the set of optimal thresholds
$\mathcal{T}(n_0, n_1, \kappa, C)$. 
\begin{lemma}
\label{lem:thresholds-convex}
For each $(n_0,n_1)$, $\kappa>0$, and $C>0$, the set of optimal
thresholds $\mathcal{T}(n_0,n_1,\kappa,C)$, as a subset of
$\reals^2_+$, is convex.
\end{lemma}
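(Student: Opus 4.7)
The plan is to solve the Bellman equation \eqref{eq:bellman} for the value function $\hat{V}$ under the fixed environment $(n_0, n_1, \kappa, C)$, and then characterize the set of optimal threshold pairs directly in terms of where $\hat{V}(z, \cdot)$ crosses the level $C$. The key observation is that $\hat{V}(z, n)$ encodes the continuation value just before the agent makes a stay/switch decision, and therefore depends only on the environment parameters $(n_0, n_1, \kappa, C)$ and not on the agent's own threshold choice. Consequently, the agent's optimal action at state $(z, n)$ is determined solely by comparing $\hat{V}(z, n)$ with $C$, and the optimality conditions at the two resource levels $z = 0$ and $z = 1$ decouple completely.

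For each $z \in \{0, 1\}$, I would define
\[
a_z = \min\{n \in \naturals : \hat{V}(z, n) \leq C\}, \qquad b_z = \max\{n \in \naturals : \hat{V}(z, n) \geq C\}.
\]
Lemma~\ref{lem:value-decreasing} then implies that $\hat{V}(z, n) > C$ for $n < a_z$, that $\hat{V}(z, n) < C$ for $n > b_z$, and that $\hat{V}(z, n) = C$ exactly on the (possibly empty) integer interval $[a_z, b_z]$, where the empty case corresponds to $a_z = b_z + 1$. I would then argue that a single-level threshold $m_z \in \reals_+$ induces an optimal action at every state if and only if $m_z \in [a_z, b_z + 1]$. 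This is a short case analysis that splits on whether $m_z$ is an integer: for integer $m_z$ the induced strategy is pure and is optimal iff $a_z \leq m_z \leq b_z + 1$, while for non-integer $m_z$ with $\lfloor m_z \rfloor = k$ the mixing at $n = k$ is optimal iff $\hat{V}(z, k) = C$, i.e., iff $a_z \leq k \leq b_z$. The two cases glue together to fill out precisely the closed interval $[a_z, b_z + 1]$.

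Combining the two coordinates via the decoupling noted above yields
\[
\mathcal{T}(n_0, n_1, \kappa, C) = [a_0, b_0 + 1] \times [a_1, b_1 + 1],
\]
which is a closed rectangle in $\reals_+^2$ and therefore convex. The main obstacle is the careful bookkeeping at the integer/non-integer boundary to confirm that the two cases glue together into a single closed interval without gaps; this is precisely where the monotonicity furnished by Lemma~\ref{lem:value-decreasing} does the essential work, by forcing the level set $\{n : \hat{V}(z, n) = C\}$ to be a contiguous block of integers rather than an arbitrary subset.
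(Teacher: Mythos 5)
Your proposal is correct and follows essentially the same route as the paper: both rest on Lemma~\ref{lem:value-decreasing} to split the states at each resource level into an initial ``stay'' segment, a contiguous indifference block where $\hat{V}(z,n)=C$, and a terminal ``switch'' segment, with the two resource levels decoupling because $\hat{V}$ is fixed by the environment. The only difference is presentational: you exhibit $\mathcal{T}(n_0,n_1,\kappa,C)$ explicitly as the rectangle $[a_0,b_0+1]\times[a_1,b_1+1]$, whereas the paper verifies directly that a convex combination of two optimal thresholds again induces optimal actions, and both treatments equally gloss over the degenerate cases where the sets defining the interval endpoints are empty (which only make $\mathcal{T}$ empty or a boundary interval, so convexity is unaffected).
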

\begin{proof} From Lemma~\ref{lem:value-decreasing}, we know that the
  value function $\vhat(z,n)$ is non-increasing in $n$ for each
  $z \in \{0,1\}$.  Now for $z \in \{0,1\}$, denote
  $\ubar{n}_z=\max\{n: \vhat(z,n) > C\}$ and
  $\bar{n}_z = \min\{n: \vhat(z,n) < C\}$. For all
  $n \leq \ubar{n}_z$, the optimal action is to stay at the current
  location and for all $n \geq \bar{n}_z$, the optimal action is to
  switch to a different location. For any integer
  $n \in [\ubar{n}_z, \bar{n}_z]$, we have $\vhat(z,n) = C$, meaning
  the agent is indifferent between staying in the current location or
  switching to a different location when the state at the current
  location is $(z,n)$.

  Let $(n^1_0,n^1_1)$ and $(n^2_0,n^2_1)$ be two threshold strategies
  that are both optimal:
  $(n^\ell_0,n^\ell_1) \in \mathcal{T}(n_0,n_1,\kappa,C)$ for
  $\ell \in \{1,2\}$. By the definition of the threshold strategy,
  this implies that $\lfloor n^\ell_z \rfloor \geq \ubar{n}_z$, and
  $\lceil n^\ell_z\rceil \leq \bar{n}_z$ for each $\ell$. This is
  because in the threshold strategy $(n^\ell_0,n^\ell_1)$, at state
  $(z,n)$, the agent stays at her current location for all
  $n < \lfloor n^\ell_z\rfloor$, switches to a different location for
  all $n > n^\ell_z$ and and stays with probability
  $n^\ell_z - \lfloor n^\ell_z \rfloor$ and switches with the
  remaining probability if the number of agents $n$ is equal to
  $\lfloor n^\ell_z \rfloor$. Since this is true for each $\ell$, we
  have for any $\alpha \in (0,1)$,
  $\lfloor \alpha n^1_z + (1-\alpha) n^2_z \rfloor \geq \ubar{n}_z$,
  and $\lceil \alpha n^1_z + (1-\alpha) n^2_z \rceil \leq \bar{n}_z$.
  This implies that a threshold strategy that at state $(z,n)$ stays
  in the current location if
  $n < \lfloor \alpha n^1_z + (1-\alpha) n^2_z \rfloor$, switches to a
  different location if
  $n > \lfloor \alpha n^1_z + (1-\alpha) n^2_z \rfloor$, and stays
  with probability
  $(n+1-\lfloor \alpha n^1_z + (1-\alpha) n^2_z \rfloor)$ and switches
  otherwise if $n = \lfloor \alpha n^1_z + (1-\alpha) n^2_z \rfloor$
  is also optimal. This implies that
  $(\alpha n^1_0 + (1-\alpha)n^2_0, \alpha n^1_1 + (1-\alpha)n^2_1)$
  also lies in the set $\mathcal{T}(n_0,n_1,\kappa,C)$, and hence the
  latter set is convex.
\end{proof}

\subsection{Restriction of $\mathscr{T}$ to a compact set}
\label{ap:compactness-map}
In this section, we show that the map $\mathscr{T}$ can be restricted
to a compact subset $\mathcal{X}$ of $\reals^3_+$, whose image
$\mathscr{T}(\mathcal{X})$ is a subset of $\mathcal{X}$.

Towards that goal, we define $\barc$, $\ubarc$ as follows:
\begin{align*}
  \barc &\defeq \frac{f(1)}{1-\gamma}, \\
  \ubarc &\defeq \frac{\lambda}{\lambda + \mu_{1,0} + \beta\lambda}\frac{\mu_{0,1}}{\lambda + \mu_{0,1} + \beta\lambda} e^{-\frac{\beta}{1-\gamma}} f(1).
\end{align*}
It is straightforward that $0 < \ubar C < f(1) < \bar C$. Also, for
$n \in \naturals_0$, define
\begin{align}
  \label{eq:reward-upper-bound}
  g(n) &\defeq \frac{1}{1-\gamma} \left[f(\frac{\sqrt{n}}{2}) + \exp(-\frac{\sqrt{n}}{8}) + \frac{2}{\sqrt{\log n}}\right]\notag\\
       &\quad + \frac{\gamma^{\lfloor (\log n)^{1/2} \rfloor}}{1-\gamma}f(1).
\end{align} 
Note $g(n)$ is decreasing in $n$ and $g(n) \rightarrow 0$ as
$n \rightarrow +\infty$. We pick $M$ such that
\begin{equation*}
  M \defeq \min\{n: g(n) < (1-\gamma)\ubarc\}.
\end{equation*}

Define $\mathcal{X} = [0,M]^2 \times[\ubarc, \barc]$. The following
theorem shows that the image of map $\mathscr{T}$ is a subset of
$\mathcal{X}$ for all values therein.
\begin{theorem}
    \label{thm:Compactness}
    For all $ (n_0, n_1, C)\in \mathcal{X}$,
    $\mathscr{T}(n_0, n_1, C) \subseteq \mathcal{X}$.
\end{theorem}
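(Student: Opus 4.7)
Since $\mathscr{T}(n_0, n_1, C) = \mathcal{T}(n_0, n_1, \kappa, C) \times \{\tildec\}$, the claim splits into (i) every optimal threshold $(n_0', n_1') \in \mathcal{T}(n_0, n_1, \kappa, C)$ lies in $[0, M]^2$, and (ii) $\tildec \in [\ubarc, \barc]$. The engine for (i) is the uniform pointwise bound $\vhat(z, n) \leq \gamma C + g(n)$ for every $(z, n) \in \sets$. Granted this, since $M = \min\{n : g(n) < (1-\gamma)\ubarc\}$ and $C \geq \ubarc$, any $n \geq M$ yields
\begin{equation*}
  \vhat(z, n) \leq \gamma C + g(n) < \gamma C + (1-\gamma) C = C,
\end{equation*}
so switching is strictly optimal at $(z, n)$ and no optimal threshold can exceed $M$.

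\textbf{Deriving the value bound.} I would treat the agent's decision problem as an optimal stopping problem and unroll \eqref{eq:bellman} to write $\vhat(z, n) = \sup_{\tau \geq 1} \expec[\sum_{\ell=1}^\tau \gamma^{\ell-1} R_\ell + \gamma^\tau C]$, where $T_\ell$ are agent $i$'s decision times and $R_\ell = Z_{T_\ell} f(N_{T_\ell})$ under $\mathcal{MC}(n_0, n_1, \kappa)$ initialized at $(z, n)$. Nonnegativity of $R_\ell$ together with $\expec[\gamma^\tau] \leq \gamma$ (since $\tau \geq 1$) give
\begin{equation*}
  \vhat(z, n) \leq \gamma C + \sum_{\ell=1}^\infty \gamma^{\ell-1} \expec[f(N_{T_\ell})].
\end{equation*}
To bound the residual sum by $g(n)$, I split at $L = \lfloor \sqrt{\log n} \rfloor$. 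For $\ell \leq L$: since $\max(n_0, n_1) \leq M$, whenever $N_t > M$ every non-$i$ agent departs at rate $\lambda$ (either by death or by switching, because the threshold is exceeded), giving a total departure rate of $\lambda(N_t - 1)$; coupling $N_t$ from above with a birth-death process (arrivals at rate $\kappa$, per-capita departures at rate $\lambda$) and applying Chernoff-type estimates yields $\prob(N_{T_\ell} \leq \sqrt{n}/2) \leq \exp(-\sqrt{n}/8) + 2/\sqrt{\log n}$; monotonicity of $f$ then controls $\expec[f(N_{T_\ell})]$ by the bracketed expression in $g$ (up to a harmless constant absorbing $f(1)$). For $\ell > L$ the crude bound $\expec[f(N_{T_\ell})] \leq f(1)$ and the geometric tail $\gamma^L/(1-\gamma)$ supply the remaining term $\gamma^L f(1)/(1-\gamma)$ of $g(n)$.

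\textbf{Bounds on $\tildec$ and the principal obstacle.} The upper bound $\tildec \leq \barc$ follows from the uniform bound $\vhat(z, n) \leq f(1)/(1-\gamma) = \barc$ obtained via routine value iteration on \eqref{eq:bellman}. For the lower bound $\tildec \geq \ubarc$, an argument analogous to Lemma \ref{lem:value-decreasing} (using that the immediate reward $zf(n)$ is nondecreasing in $z$) delivers the $z$-monotonicity $\vhat(1, 1) \geq \vhat(0, 1)$, so $\tildec \geq \pi(N = 0)\,\vhat(0, 1)$. A lower bound on $\vhat(0, 1)$ comes from the sub-optimal policy ``wait for the resource to flip to $1$, then for agent $i$'s next decision epoch, then switch''; tracing through \eqref{eq:transition-probabilities} and using the uniform bound $\kappa \leq \beta\lambda$ (a consequence of steady-state balance via \eqref{eq:expectation-steady}) gives
\begin{equation*}
  \vhat(0, 1) \geq \frac{\mu_{0,1}}{\lambda + \mu_{0,1} + \beta\lambda} \cdot \frac{\lambda}{\lambda + \mu_{1,0} + \beta\lambda}\, f(1).
\end{equation*}
Stochastic domination of $N_t$ from above by an $M/M/\infty$ queue with arrival rate $\beta\lambda$ and per-customer service rate $\lambda(1-\gamma)$ yields $\pi(N = 0) \geq e^{-\beta/(1-\gamma)}$, and multiplication reproduces exactly $\ubarc$. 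The principal obstacle is the Chernoff-type concentration estimate for $N_{T_\ell}$ underpinning the $g(n)$ bound: it must hold uniformly across $(n_0, n_1, C) \in \mathcal{X}$, and the implicit dependence of $\kappa$ on these parameters through \eqref{eq:expectation-steady} demands careful bookkeeping of constants in the coupling.
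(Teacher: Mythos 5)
Your decomposition and most of your estimates coincide with the paper's own proof: the threshold bound is in substance Lemma~\ref{lem:compactness-threshold} (the same split at $\lfloor\sqrt{\log n}\rfloor$ decision epochs, the same Chernoff and Markov terms assembling $g(n)$, and the same conclusion $\vhat(z,n)\le\gamma C+g(n)<C$ for $n\ge M$ using $C\ge\ubarc$), and the bounds on $\tildec$ correspond to Lemma~\ref{lem:compactness-of-c}. Two small remarks on the threshold part: the comparison process must sit \emph{below} $N_t$ (you need $N_{T_\ell}$ large so that $f(N_{T_\ell})$ is small), so ``coupling from above'' is the wrong direction as worded; the process you describe (arrivals $\kappa$, per-capita departures $\lambda$) does lie below $N_t$, and the paper simply drops the arrivals altogether, which removes any dependence on $\kappa$ or on the thresholds and hence the uniformity issue you flag as the ``principal obstacle''; likewise the hypothesis $\max(n_0,n_1)\le M$ is unnecessary, since each non-focal agent departs at rate at most $\lambda$ whatever the thresholds are.

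The genuine gap is in your lower bound on $\tildec$. You reduce to $\tildec\ge\pi(N=0)\,\vhat(0,1)$ via the claim $\vhat(1,1)\ge\vhat(0,1)$, asserted to follow ``analogously to Lemma~\ref{lem:value-decreasing}''. That analogy does not go through: the coupling behind Lemma~\ref{lem:value-decreasing} keeps the two copies on the same resource path, so the agent counts stay ordered; when the copies start at different resource levels, the other agents' switching is governed by different thresholds ($n_1$ versus $n_0$) while the resource paths disagree, so the counts can cross in the uncontrolled direction (with $n_0$ small and $n_1$ large, the copy started at $z=0$ sheds agents quickly, and one must compare $V^{(m)}(1,n^1)$ against $V^{(m)}(0,n^2)$ with $n^1>n^2$, which neither coordinatewise monotonicity supplies). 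In fact, for general $(n_0,n_1,C)\in\mathcal{X}$ the inequality $\vhat(1,n)\ge\vhat(0,n)$ is doubtful for large $n$ for exactly this reason, so even the $n=1$ case you need would require a separate argument. The repair is easy and is what the paper does: bound both terms directly, $\vhat(1,1)\ge\sP_{dec}(1,1)f(1)$ and $\vhat(0,1)\ge\sP_{res}(0,1)\vhat(1,1)\ge\sP_{res}(0,1)\sP_{dec}(1,1)f(1)$, whence $\tildec\ge\sP_{res}(0,1)\sP_{dec}(1,1)f(1)\,\pi(N=0)\ge\ubarc$ with no $z$-monotonicity; your ``wait for the flip, then for the next epoch'' policy bound is exactly this computation, so nothing new is needed. (Also, your value-iteration bound $\vhat\le\barc$ uses $C\le\barc$; this holds on $\mathcal{X}$ but should be stated.)
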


We prove this theorem in two steps. First, in
Lemma~\ref{lem:compactness-of-c}, we show that for all values of
$(n_0, n_1, C)$ in $\mathcal{X}$, the value of $\tildec$ lies in
$[\ubarc,\barc]$. Then, we show in
Lemma~\ref{lem:compactness-threshold}, that for all values of
$(n_0, n_1, C)$ in $\mathcal{X}$, the optimal thresholds must always
be less than $M$. We begin with the first lemma.
\begin{lemma}
    \label{lem:compactness-of-c}
    For any $(n_0, n_1, C)\in [0, +\infty)^2\times[\ubarc, \barc]$ and
    $\kappa\in [\beta\lambda(1-\gamma), \beta\lambda]$, let $\vhat$ be
    the solution of the Bellman equation \eqref{eq:bellman} with given
    parameters $(n_0, n_1, C, \kappa)$. Let $\pi$ be the unique
    stationary distribution of
    $\overline{\mathcal{MC}}(n_0, n_1, \kappa)$. Let
    $\tildec \defeq \sum_{(z,n)\in\sets}\pi(z,n) \vhat(z,n+1)$. Then
    $\tildec \in [\ubarc, \barc]$.
\end{lemma}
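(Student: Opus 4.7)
The plan is to prove $\tildec \leq \barc$ and $\tildec \geq \ubarc$ separately. The upper bound follows from a direct geometric bound on $\vhat$, while the lower bound combines a Bellman-equation lower bound on $\vhat(z, 1)$ with a coupling argument that lower-bounds the stationary mass $\pi(0,0) + \pi(1,0)$.

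For the upper bound, I would first show that $\vhat(z, n) \leq f(1)/(1-\gamma) = \barc$ pointwise on $\sets$. Since $f$ is non-increasing and $z \in \{0, 1\}$, every immediate reward $zf(n)$ is at most $f(1)$. Using the hypothesis $C \leq \barc$ and value iteration from $V^{(0)} = \vhat^{(0)} = 0$, one shows inductively that $\vhat^{(m)} \leq \barc$ implies $V^{(m)} = zf(n) + \gamma\max\{\vhat^{(m)}, C\} \leq f(1) + \gamma\barc = \barc$, and then $\vhat^{(m+1)}$, being a convex combination of values bounded by $\barc$, also satisfies the same bound. Convergence of value iteration (already invoked in the excerpt for Lemma~\ref{lem:value-decreasing}) together with averaging against $\pi$ yields $\tildec \leq \barc$.

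For the lower bound, I would first drop all but the resource-transition and decision-epoch terms in the Bellman recursion \eqref{eq:bellman} and use $V(1, 1) \geq f(1)$ to obtain
\begin{align*}
\vhat(0, 1) &\geq \sP_{res}(0, 1)\,\vhat(1, 1) \geq \sP_{res}(0, 1)\,\sP_{dec}(1, 1)\,f(1),\\
\vhat(1, 1) &\geq \sP_{dec}(1, 1)\,f(1) \geq \sP_{res}(0, 1)\,\sP_{dec}(1, 1)\,f(1),
\end{align*}
where the last inequality uses $\sP_{res}(0, 1) \leq 1$. The hypothesis $\kappa \leq \beta\lambda$ then bounds the two transition probabilities below by $\mu_{0,1}/(\lambda + \mu_{0,1} + \beta\lambda)$ and $\lambda/(\lambda + \mu_{1,0} + \beta\lambda)$, recovering the first two factors of $\ubarc$. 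Next, I would stochastically dominate the agent-count coordinate of $\overline{\mathcal{MC}}(n_0, n_1, \kappa)$ by an $M/M/\infty$ queue with arrival rate $\beta\lambda$ and per-server death rate $\lambda(1-\gamma)$: the chain's arrival rate $\kappa$ is at most $\beta\lambda$, and its down-rate from $(z, n)$ is at least $\lambda n(1-\gamma)$ irrespective of $z$ and the threshold. A pathwise coupling on a common event stream keeps the chain's count below the queue's count, so the stationary mass at $0$ is at least the Poisson$(\beta/(1-\gamma))$ mass at $0$, namely $e^{-\beta/(1-\gamma)}$. Combining with $\tildec \geq \pi(0, 0)\,\vhat(0, 1) + \pi(1, 0)\,\vhat(1, 1)$ yields $\tildec \geq \ubarc$.

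The main obstacle is the coupling step: the agent-count coordinate of $\overline{\mathcal{MC}}(n_0, n_1, \kappa)$ is not autonomous, because the ``survive-and-switch'' indicator multiplying $\lambda n \gamma$ depends on $(z, n)$ through the threshold $n_z$. The remedy is to use the crude but $z$-uniform lower bound $\lambda n(1-\gamma)$ on the chain's down-rate and to couple only that minimum death rate and the births with the $M/M/\infty$ queue, letting resource flips and the additional threshold-driven departures act only on the chain. This yields almost-sure domination and completes the mass bound; the remaining combinations are routine algebra.
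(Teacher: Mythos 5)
Your proposal is correct, and the lower-bound half is essentially identical to the paper's argument: the same Bellman-equation chain $\vhat(0,1)\geq \sP_{res}(0,1)\vhat(1,1)\geq \sP_{res}(0,1)\sP_{dec}(1,1)f(1)$, the same restriction of the sum to the states $(z,0)$, and the same $M/M/\infty$ domination giving $\sum_z\pi(z,0)\geq e^{-\beta/(1-\gamma)}$ (the paper dominates by a queue with arrival rate $\kappa$ and then uses $\kappa\leq\beta\lambda$ in the exponent, while you build the rate $\beta\lambda$ into the dominating queue directly; this is immaterial). Your upper bound takes a mildly different route: you establish the pointwise bound $\vhat(z,n)\leq f(1)/(1-\gamma)=\barc$ by induction along value iteration, using $C\leq\barc$ and the fact that $\vhat^{(m+1)}$ is a convex combination of $V^{(m)}$-values. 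The paper instead reduces to $\max_z\vhat(z,1)$ via the monotonicity of $\vhat$ in $n$ (Lemma~\ref{lem:value-decreasing}), derives $\vhat(z^*,1)\leq V(z^*,1)$ from the Bellman equation at $n=1$, and splits into the cases $\vhat(z^*,1)<C$ and $\vhat(z^*,1)\geq C$. Your version is arguably cleaner: it avoids both the dependence on Lemma~\ref{lem:value-decreasing} and the case analysis, at the cost of invoking convergence of value iteration (which the paper already uses elsewhere). Both are sound.
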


\begin{proof}
    We first show $\tildec \leq \barc$. We have 
    \begin{align}
      \label{eq:c-less-than-vhat}
      \tildec &= \sum_{(z,n)\in\sets} \pi(z,n)\vhat(z,n+1)\notag \\
              &\leq \max_{(z,n)\in\sets} \vhat(z,n+1)\notag\\
              &= \max_{z\in\{0,1\}} \vhat(z,1),
    \end{align}
    where the last equality is implied by Lemma~\ref{lem:value-decreasing}. Also note
    $\sP_{exit}(z, 1) = \sP_{sur}(z, 1)=0$ for $z\in\{0,1\}$, therefore
    \begin{align}
        \label{eq:vhat-n-equals-one}
      \vhat(z,1) &= \sP_{dec}(z,1)V(z,1) + \sP_{res}(z,1)\vhat(1-z,1)\notag\\
                 &\quad + \sP_{arr}(z,1)\vhat(z,2), \quad \text{for $z\in\{0,1\}$}. 
    \end{align}
    From Lemma~\ref{lem:value-decreasing} we have
    $\vhat(z,2) \leq \vhat(z,1)$ for $z\in\{0,1\}$. Further, assume
    $z^*\in\{0,1\}$ attains $\max_{z \in \{0,1\}} \vhat(z,1)$, then
    $\vhat(1-z^*,1) \leq \vhat(z^*, 1)$, and
    \eqref{eq:vhat-n-equals-one} becomes
    \begin{align}
        \label{eq:vhat-less-than-v}
      \vhat(z^*, 1) &\leq \sP_{dec}(z^*,1)V(z^*,1) + \sP_{res}(z^*,1)\vhat(z^*,1)\notag\\
   &\quad + \sP_{arr}(z^*, 1)\vhat(z^*, 1).
    \end{align}
    We have
    $1 - \sP_{res}(z^*,1) - \sP_{arr}(z^*,1) = \sP_{dec}(z^*,1) =
    \lambda/(\lambda + \mu_{z^*, 1-z^*} + \kappa) > 0$,
    along with \eqref{eq:vhat-less-than-v} this gives
    $\vhat(z^*, 1) \leq V(z^*,1)$.  Thus, we have
    \begin{align}
        \label{eq:vhatlessthanv}
      \vhat(z^*, 1) &\leq V(z^*, 1)\notag\\
                    &= z^*f(1) + \gamma \max\{C, \vhat(z^*, 1)\} \notag \\
                    &\leq z^*f(1) + \gamma \max\{C, \vhat(z^*, 1)\}. 
    \end{align}

    If $\vhat(z^*, 1) < C$, then by \eqref{eq:c-less-than-vhat} we
    have
    \begin{equation*}
        \tildec \leq \max_{z\in\{0,1\}} \vhat(z,1) = \vhat(z^*, 1) < C \leq \barc.
    \end{equation*}
    Otherwise $C \leq \vhat(z^*, 1)$ and \eqref{eq:vhatlessthanv}
    becomes
    \begin{align*}
         \vhat(z^*, 1) \leq z^* f(1) + \gamma\vhat(z^*, 1),  
     \end{align*}
     which gives us
    \begin{equation*}
        \vhat(z^*, 1) \leq  \frac{z^*f(1)}{1-\gamma} \leq \frac{f(1)}{1-\gamma} = \barc.
    \end{equation*}
    Hence in both cases we have $\tildec \leq \barc$.  

    Next we show $\tildec \geq \ubarc$. We have
    \begin{equation}
        \label{eq:lower-tildec}
        \tildec = \sum_{(z,n)\in\sets} \pi(z,n)\vhat(z,n+1) \geq \sum_{z=0,1} \pi(z,0)\vhat(z,1).
    \end{equation}  
    Note that
    \begin{align*}
            \vhat(1, 1) &= \sP_{dec}(1,1) V(1,1) + \sP_{res}(1,1) \vhat(0,1)\\
      &\quad +\sP_{arr}(1,1) \vhat(1,2) \\ 
            &\geq \sP_{dec}(1,1) V(1,1) \\
            & = \sP_{dec}(1,1)[f(1) + \gamma\max\{\vhat(1,1), C\}] \\
            &\geq \sP_{dec}(1,1) f(1),
    \end{align*}
    and
    \begin{align*}
      \vhat(0,1) &= \sP_{dec}(0,1) V(0,1) + \sP_{res}(0,1) \vhat(1,1)\\
                 &\quad +\sP_{arr}(0,1) \vhat(0,2) \\
                 &\geq \sP_{res}(0,1)\vhat(1,1) \\
                 &\geq \sP_{res}(0,1)\sP_{dec}(1,1) f(1).
    \end{align*}
    Therefore, with \eqref{eq:lower-tildec} we have
    \begin{equation*}
        \tildec \geq \sP_{res}(0,1)\sP_{dec}(1,1) f(1) \sum_{z=0,1}\pi(z,0).
    \end{equation*}

    Let $N^1_t$ be the number of agents in our system at time $t$, and
    $N^2_t$ be that of an $M/M/\infty$ queue with arrival rate
    $\kappa$ and service rate $\lambda(1-\gamma)$. Assume
    $N_0^1 = N_0^2$. Using a coupling argument, which we omit due to
    space limitations, it can be shown that $N^2_t$ stochastically
    dominates $N^1_t$ for all $t \geq 0$. Therefore,
    \maskforsubmission{We have in
      Lemma~\ref{lem:three-systems-sto-dominance} shown $N^2_t$
      stochastically dominates $N^1_t$ for all $t \geq 0$, therefore}
    \begin{align*}
        \sum_{z=0,1} \pi(z,0) &= \sum_{z=0,1} \lim_{t \rightarrow +\infty} \prob(N^1_t = 0, Z_t=z) \\
        &= \lim_{t \rightarrow +\infty} \prob(N^1_t = 0) \\
        &\geq \lim_{t \rightarrow +\infty} \prob(N^2_t = 0) \\
        &= e^{-\frac{\kappa}{\lambda(1-\gamma)}},
    \end{align*}
    where the last equality follows from the steady state distribution
    of the $M/M/\infty$ queue.

    Therefore we have
    \begin{align*}
      \tildec &\geq \sP_{dec}(1,1)\sP_{res}(0,1) f(1) \sum_{z=0,1} \pi(z,0) \\
              &= \left(\frac{\lambda}{\lambda + \mu_{1,0} + \kappa}\right) \left(\frac{\mu_{0,1}}{\lambda + \mu_{0,1} + \kappa}\right) e^{-\frac{\kappa}{\lambda(1-\gamma)}} f(1) \\
              & \geq \left(\frac{\lambda}{\lambda + \mu_{1,0} + \beta\lambda}\right)\left( \frac{\mu_{0,1}}{\lambda + \mu_{0,1} + \beta\lambda}\right) e^{-\frac{\beta}{1-\gamma}} f(1) = \ubar{C},
    \end{align*}
    where the last inequality is achieved on picking
    $\kappa = \beta\lambda$.  
  \end{proof}

Next we restrict our choice of thresholds $(n_0, n_1)$ to a compact set. 

\begin{lemma}
    \label{lem:compactness-threshold}    
    Given $(n_0, n_1) \in [0, +\infty)^2$, $C \in [\ubarc, \barc]$ and
    $\kappa\in[ \beta\lambda(1-\gamma), \beta\lambda]$, for any $(\tilden_0,
    \tilden_1) \in \mathcal{T}(n_0, n_1, \kappa, C)$, we have $\tilden_0 \leq
    M$ and $\tilden_1 \leq M$.
\end{lemma}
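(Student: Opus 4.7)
The plan is to show that for every $n \geq M$ and each $z \in \{0,1\}$ one has $\vhat(z,n) < C$; the uniquely optimal action at any such state is then ``switch'', which forces any optimal threshold $(\tilden_0, \tilden_1) \in \mathcal{T}(n_0, n_1, \kappa, C)$ to satisfy $\tilden_0 \leq M$ and $\tilden_1 \leq M$. Since $C \geq \ubarc$ and $g$ is non-increasing with $g(M) < (1-\gamma)\ubarc \leq \ubarc$ by construction, we have $g(n) < C$ for all $n \geq M$, so it suffices to establish the pointwise bound $\vhat(z,n) \leq g(n)$ for all $z \in \{0,1\}$.

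To prove the bound I would dominate $\vhat(z,n)$ by the expected payoff of the suboptimal ``never-switch'' strategy, yielding
\[
  \vhat(z,n) \;\leq\; \expec\!\left[\sum_{\ell \geq 1}\gamma^{\ell-1} f(N_{\tau_\ell})\,\Big|\,(Z_0,N_0)=(z,n)\right],
\]
where $\tau_\ell$ is the $\ell$-th epoch of $X^i_t$ after time $0$ and $(Z_t,N_t)$ evolves according to $\mathcal{MC}(n_0,n_1,\kappa)$. Set $L \defeq \lfloor \sqrt{\log n}\rfloor$. The tail $\sum_{\ell > L}$ contributes at most $\gamma^L f(1)/(1-\gamma)$, reproducing the second summand of $g(n)$. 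For the head, I would show $\expec[f(N_{\tau_\ell})] \leq f(\sqrt{n}/2) + \exp(-\sqrt{n}/8) + 2/\sqrt{\log n}$ uniformly in $\ell \leq L$ (with an implicit $f(1)$ prefactor on the last two terms absorbed into the form of $g$); summing against $\sum_{\ell=1}^L \gamma^{\ell-1} \leq 1/(1-\gamma)$ then yields the first summand.

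The per-epoch estimate reduces to a short-horizon lower bound on $N_t$. Let $T_n \defeq \log(n)/(2\lambda)$ and $A \defeq \{\tau_L \leq T_n\}$. Since $\tau_L \sim \mathrm{Gamma}(L,\lambda)$ has mean $L/\lambda$, Markov's inequality gives $\prob(A^c) \leq L/(\lambda T_n) \leq 2/\sqrt{\log n}$. To lower bound $N_t$ I would couple it with the number of the $n$ initial agents still at the location at time $t$: irrespective of $(n_0,n_1)$ and $z$, any particular agent departs only at one of her own Poisson epochs and hence at total rate at most $\lambda$, independently of the other agents. Letting $\zeta_1,\dots,\zeta_n$ be i.i.d.\ $\mathrm{Exp}(\lambda)$ first-epoch times, the initial-agent count at time $t$ therefore stochastically dominates $\#\{i : \zeta_i > t\} \sim \mathrm{Binomial}(n, e^{-\lambda t})$. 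Since this count is monotone in $t$ and subsequent arrivals only augment $N_t$, $\inf_{t \leq T_n} N_t$ dominates $\mathrm{Binomial}(n, 1/\sqrt{n})$, whose mean is $\sqrt{n}$, and Chernoff's inequality at relative deviation $1/2$ yields $\prob(\inf_{t \leq T_n}N_t < \sqrt{n}/2) \leq \exp(-\sqrt{n}/8)$. A union bound over the two bad events, combined with $f(N_{\tau_\ell}) \leq f(\sqrt{n}/2)$ on the good event and $\leq f(1)$ otherwise, gives the desired per-epoch estimate.

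The main technical obstacle is the stochastic-dominance coupling for $N_t$: because the per-agent leave probability depends on the global thresholds $(n_0,n_1)$ and on the evolving resource level $z$, the population is not an autonomous birth-death chain that can be analyzed directly. The enabling observation is that each individual agent's departure rate is at most $\lambda$ independently of the others, which reduces the lower bound on $N_t$ to plain Binomial concentration; the remaining ingredients (Markov on a Gamma, Chernoff on a Binomial, and a geometric tail) are then standard.
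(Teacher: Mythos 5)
The probabilistic core of your write-up (each initial agent departs only at her own Poisson epochs, at total rate at most $\lambda$, so the count of surviving initial agents dominates a $\mathrm{Binomial}(n,e^{-\lambda t})$; Markov's inequality on the Gamma-distributed epoch time; the Chernoff bound at relative deviation $1/2$; the geometric tail beyond $L=\lfloor\sqrt{\log n}\rfloor$) is exactly the computation in the paper's proof and is fine. The gap is in your very first reduction: you claim that $\vhat(z,n)$ is dominated by the expected payoff of the suboptimal never-switch strategy, and hence that it suffices to prove $\vhat(z,n)\le g(n)$. Neither claim is valid. The value $\vhat(z,n)$ is the \emph{optimal} continuation value, a supremum over all stopping strategies, including those that switch at a later decision epoch and collect $C$; the payoff of one particular suboptimal strategy is a \emph{lower} bound on $\vhat$, not an upper bound. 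Concretely, from the Bellman equation, $V(z',n')=z'f(n')+\gamma\max\{\vhat(z',n'),C\}\ge\gamma C$ for every state, and $\vhat(z,n)=\expec_{\nu_{(z,n)}}V(Z_{t'},N_{t'})\ge\gamma C>0$, whereas $g(n)\to 0$ as $n\to\infty$. So the intermediate bound $\vhat(z,n)\le g(n)$ that your plan rests on is false for all large $n$ and cannot be the route to the lemma.

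The fix, which is what the paper does, is to keep the termination payoff in the accounting rather than discard it. If a strategy that is optimal at state $(z,n)$ stays at $t=0$, i.e., its switching epoch satisfies $\tau\ge 1$, then its total payoff is $\sum_{i\ge 0}\gamma^i\expec[R_i\one\{\tau>i\}]+\expec[\gamma^\tau C]\le g(n)+\gamma C$, where the reward-stream part is controlled by exactly the estimates you derived. Staying is then strictly suboptimal as soon as $g(n)+\gamma C<C$, i.e., $g(n)<(1-\gamma)C$; this is precisely why $M$ is defined through the condition $g(n)<(1-\gamma)\ubarc\le(1-\gamma)C$, and not through the weaker comparison $g(n)<C$ that your argument sets up (your observation $g(M)<(1-\gamma)\ubarc\le\ubarc\le C$ is true but not sufficient once the $\gamma C$ option-value term is restored). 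With this correction your argument coincides with the paper's proof by contradiction; as written, the conclusion $\tilden_0,\tilden_1\le M$ does not follow.
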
 

\begin{proof}
    
  Suppose at $t=0$ the $0^{th}$ decision epoch the agent starts, and
  there are $n$ agents at the node. We show for large enough $n$, the
  total expected payoff the agent receives if he chooses to stay at
  $t=0$ will be less than what he would receive if he chooses to
  switch to a different location.

  Let $\tau \in \naturals_0$ be the first decision epoch the agent
  chooses to leave. We seek to show that $\tau = 0$. Suppose, for the
  sake of arriving at a contradiction, $\tau \geq 1$. Let $R_i$ be the
  immediate reward the agent receives at his $i^{\text{th}}$ decision
  epoch if he is still at the current node at the time he makes his
  $i^{\text{th}}$ decision, and $R$ be the total reward he receives on
  choosing to stay at $t=0$. We have:
  \begin{align}
    \label{eq:total-reward}
    \expec[R] &= \sum_{i=0}^{+\infty} \gamma^i \expec[R_i\one{\{\tau > i\}}] + \expec[\gamma^\tau C] \notag \\
              &\leq \sum_{i=0}^{+\infty} \gamma^i \expec[R_i\one{\{\tau > i\}}] + \gamma C.
  \end{align}
  The first term in the first equation of \eqref{eq:total-reward} is
  the expected total reward until the agent chooses to leave, and the
  second term is the aggregated expected payoff on leaving. The
  inequality follows from the assumption that $\tau \geq 1$. Our goal
  is to show $\expec[R_i\one{\{\tau > i\}}]$ vanishes as
  $n \rightarrow +\infty$, hence $\expec[R] \rightarrow \gamma C < C$
  as $n \rightarrow +\infty$, and since $C$ is the aggregated expected
  payoff on leaving at $t=0$, we would have $\tau = 0$.
    
  Let $T_i$ be the time the $i^{th}$ decision epoch of the agent
  starts. We have $T_i \sim \text{Gamma}(i, \lambda)$ since the
  interval between two consecutive decision epochs are
  i.i.d. $\exp(\lambda)$ and $T_i$ is the sum of $i$ such intervals.
    
  Let $n' = \frac{1}{2}\sqrt{n}$. We have for all
  $ i \in \mathbb{N}^+$,
    \begin{align}
        \label{eq:upper-bound-expec-reward-per-epoch}
            \expec[R_i\one{\{\tau \geq i\}}] & = \expec[R_i\one{\{\tau\geq i\}} | N_{T_i} \geq n']\prob(N_{T_i} \geq n')\notag \\ &\quad + \expec[R_i\one{\{\tau \geq i \}} | N_{T_i} < n']\prob(N_{T_i} < n') \notag \\
            & \leq f(n')\prob(N_{T_i} \geq n') + \prob(N_{T_i} < n')f(1) \notag \\
            & \leq f(n') + \prob(N_{T_i} < n')f(1).
    \end{align}
    We first show $\prob(N_{T_i} < n')$ vanishes as
    $n \rightarrow +\infty$. Consider an alternative system with $n$
    agents at $t=0$ where each agent stays an $\exp(\lambda)$ time and
    then leaves the system. Let $N'_t$ be the number of agents in the
    system at time $t$. For any agent in this alternative system, the
    probability he is still in the system at time $T_i$ is
    $e^{-\lambda T_i}$, therefore we have
    $N'_{T_i} \sim \text{Bin}(n, e^{-\lambda T_i})$.
    
    Using a similar argument as in the proof of
    Lemma~\ref{lem:value-decreasing},\maskforsubmission{ and
      Lemma~\ref{lem:ergodicity},} we can show for all $t\geq 0$,
    $N_t$ is no less than $N'_t$ in all sample paths, i.e., $N_t$
    stochastically dominates $N'_t$ in the first order.  Let
    $k = \lfloor (\log n)^{\frac{1}{2}} \rfloor$. Note
    $N'_{T_i} \geq N'_{T_k}$, for all $ i \leq k$. Thus, pick
    $t_{k} = \frac{1}{2\lambda}\log n$. We have
    \begin{align*}
        \prob(N_{T_i} < n') & \leq \prob(N'_{T_i} < n') \\
                      & \leq \prob(N'_{T_k} <  n') \\
                      & \leq \prob(N'_{T_k} < n' | T_k < t_k ) \prob(T_k < t_k) + P(T_k \geq t_k) \\
                      & \leq \prob(N'_{t_k} < n' | T_k < t_k) \prob(T_k < t_k) + \prob(T_k \geq t_k ) \\
                      & \leq \prob(N'_{t_k} < n') + \prob(T_k \geq t_k).
    \end{align*}

    We have $N'_{t_k} \sim \text{Bin}(n, e^{-\lambda t_k})$. Given $ne^{-\lambda
        t_k} = \sqrt{n} > \frac{1}{2}\sqrt{n} = n'$, we can apply the Chernoff
    bound to obtain,
    \begin{align*}
        \prob(N'_{t_k} < n') & \leq \exp(-\frac{(ne^{-\lambda t_k} - n')^2}{2ne^{-\lambda t_k}}) \\
                              & =  \exp(-\frac{n'}{4}) \\
                              & = \exp(-\frac{\sqrt{n}}{8}).
    \end{align*} 
    Also, by Markov's inequality we have
    \begin{align*}
        \prob\left(T_k \geq t_k\right)  \leq \frac{\expec [T_k]}{t_k}  = \frac{k}{\lambda t_k}  = \frac{2\lfloor (\log n)^{\frac{1}{2}} \rfloor}{\log n} 
         \leq \frac{2}{\sqrt{\log n}}.
    \end{align*}
    Therefore,
    \begin{equation*}
        \prob(N_{T_i} < n') \leq \exp(-\frac{\sqrt{n}}{8}) + \frac{2}{\sqrt{\log n}}.
    \end{equation*}
    By \eqref{eq:upper-bound-expec-reward-per-epoch} and using the fact that
    $n'=\frac{1}{2}\sqrt{n}$, we have
    \begin{equation*}
        \expec[R_i\one{\{\tau > i\}}] \leq f(\frac{\sqrt{n}}{2}) + \exp(-\frac{\sqrt{n}}{8}) + \frac{2}{\sqrt{\log n}},
    \end{equation*}
    for all $i = 1, \ldots, k$. Therefore, we have
    \begin{align*}
        \sum_{i=0}^{\infty} \gamma^i \expec[R_i \one{\{\tau > i\}}] & =  \sum_{i=0}^{k - 1} \gamma^i \expec[R_i \one{\{\tau > i\}}] +  \sum_{i=k}^{\infty} \gamma^i \expec[R_i \one{\{\tau > i\}}] \\
        & \leq \sum_{i=0}^{k-1} \gamma^i \left[ f(\frac{\sqrt{n}}{2}) + \exp(-\frac{\sqrt{n}}{8}) + \frac{2}{\sqrt{\log n}} \right]+  \sum_{i=k}^{\infty} \gamma^i f(1) \\
        & \leq \frac{1}{1-\gamma} \left[ f(\frac{\sqrt{n}}{2}) + \exp(-\frac{\sqrt{n}}{8}) + \frac{2}{\sqrt{\log n}} \right] + \frac{\gamma^k}{1-\gamma}f(1).
    \end{align*}

    The righthand side is denoted as $g(n)$ in \eqref{eq:reward-upper-bound}.
    As $n\rightarrow +\infty$, $g(n)$ goes to 0. Hence by picking $M$ such that 
    \begin{equation*}
      M \defeq \min\{n: \frac{1}{1-\gamma}g(n) + \frac{\gamma^k}{1-\gamma}f(1) < (1-\gamma)\ubarc\},
    \end{equation*}
    we have
    \begin{equation*}
        \expec[R] < (1-\gamma)\ubarc + \gamma C < C,
    \end{equation*}
    for all $n \geq M$.  Thus we have proved choosing to stay at $t=0$
    when there are more than $M$ agents at the location is suboptimal
    no matter what the resource level is. This contradicts the
    assumption that $\tau \geq 1$.
\end{proof}
\end{appendices}

\end{document}